\newcommand{\ud}{\mathrm{d}}
\DeclareMathOperator*{\esssup}{ess\,sup}
\theoremstyle{thmstyleone}%
\newtheorem{theorem}{Theorem}
\newtheorem{lemma}[theorem]{Lemma}%
\theoremstyle{thmstyletwo}%
\newtheorem{remark}{Remark}%
\theoremstyle{thmstylethree}%
\begin{document}

\title[On hierarchical competition through reduction of individual growth]{On hierarchical competition through reduction of individual growth}


\author[1]{\fnm{Carles} \sur{Barril} }\email{carles.barril@uab.cat}

\author[1,3]{\fnm{\`{A}ngel}  \sur{Calsina}}\email{angel.calsina@uab.cat}

\author[2]{\fnm{Odo} \sur{Diekmann}}\email{O.Diekmann@uu.nl}

\author*[1]{\fnm{J\'{o}zsef Z. } \sur{Farkas}}\email{jozsefzoltan.farkas@uab.cat}

\affil*[1]{\orgdiv{Department de Matem\`{a}tiques}, \orgname{Universitat Aut\`{o}noma de Barcelona}, \orgaddress{\street{} \city{Bellaterra}, \postcode{08193},  \country{Spain}}}

\affil[2]{\orgdiv{Department of Mathematics}, \orgname{University of Utrecht}, \orgaddress{\street{Budapestlann 6, PO Box 80010}, \city{Utrecht}, \postcode{3508 TA}, \country{The Netherlands}}}

\affil[3]{\orgdiv{Centre de Recerca Matem\`{a}tica},  \orgaddress{ \city{Bellaterra}, \postcode{08193},  \country{Spain}}}


\abstract{We consider a population organised hierarchically with respect to size in such a way that the growth rate of each individual depends only on the presence of larger individuals. As a concrete example one might think of a forest, in which the incidence of light on a tree (and hence how fast it grows) is affected by shading by taller trees. The classic formulation of a model for such a size-structured population employs a first order quasi-linear partial differential equation equipped with a non-local boun\-dary condition.  However,  the model can also be formulated as a delay equation, more specifically a scalar renewal equation, for the population birth rate.  After discussing the well-posedness of the delay formulation,  we analyse how many stationary birth rates the equation can have in terms of the functional parameters of the model.  In par\-ti\-cular we show that,  under reasonable and rather general assumptions, only one stationary birth rate can exist besides the trivial one (associated to the state in which there are no individuals and the population birth rate is zero). We give conditions for this non-trivial stationary birth rate to exist and analyse its stability using the principle of linearised stability for delay equations.  Finally,  we relate the results to the alternative,  partial differential equation formulation of the model.}

\keywords{Physiologically structured population,  delay formulation,  stability.}

\pacs[MSC Classification]{92D25, 35L04, 34K30}

\maketitle

\textit{We dedicate this paper to Professor Glenn F. Webb, a friend, mentor and distinguished scientist. Over the past 50 years Glenn has made tremendous contributions to a wide variety of research domains, ranging from semigroup theory to cancer modelling. Structured population dynamics has been a central fixture to Glenn's research interest for decades. Indeed, Glenn is recognised as one of the worldwide leading experts of age- and size-structured population dynamics, an exciting field of research, which has enjoyed tremendous growth in recent decades. We are happy to have the opportunity to make a small contribution to this field and to this special issue honouring Glenn and celebrating his achievements.}

\section{Introduction}\label{sec1}

In terms of numbers, the dynamics of a population is generated by mor\-ta\-lity and reproduction.   In structured population models  \cite{Webb1985, MD1986, MR2008, dull2022}, individuals are characterized by variables such as age, size or other (physiological) characteristics. In that case, development/maturation needs to be modelled too (a trivial task in the case of age, but certainly not in general!).

   As explained in detail in \cite{DGHKMT2001}, density dependence is most easily incorporated in a two step procedure: i) first introduce the environmental condition via the requirement that individuals are independent from one another when this condition is prescribed as a function of time; ii) next model feedback by specifying how, at any particular time, the environmental condition is influenced by the population size and composition. In the inspiring book \cite{RP2013} detailed ecological motivation is presented for including in this feedback loop the impact of density dependence on development and maturation. 

   Here our aim is to investigate in the context of a toy model the consequences of density dependence that only affects development directly (fertility is affected indirectly, since it depends on the developmental stage of the individual). We do so for a one-dimensional i-state (i.e., the variable capturing the relevant differences among individuals ‘lives’ on the real line), so for an i-state space that comes equipped with an order relation. In fact we shall assume that the presence of ‘larger’ individuals has a negative impact on the growth rate of ‘smaller’ individuals (as a motivating example one might think of trees and shading, with the i-state interpreted as ‘height’; but please note that we ignore spatial structure and that, consequently, the model is but a caricature).
For the incorporation of space into physiologically structured population models see \cite{webb2008}.  For an alternative approach to hierarchically structured models see \cite{thieme1986}.

The organisation of the paper is as follows. In Section 2 we first present the classic PDE formulation of the model. Then we present biological assumptions underlying the model and deduce a scalar nonlinear renewal equation for the population birth rate (the so called delay formulation). In Section 3 a dynamical systems framework for the renewal equation is outlined. In Section 4 we give conditions guaranteeing the existence of a non-zero stationary birth rate. In Section 5 we apply the principle of linearised stability for delay equations \cite{Diek1} to prove that, for a certain two-parameter family of fertility functions, such a stationary birth rate (whenever it exists) is locally asymptotically stable. We also show that, under natural hypotheses on the ingredients, the zero stationary birth rate is a global attractor when it is the only stationary birth rate.

In Appendix \ref{appendixB} a technical result needed in Section \ref{section-stability} is shown. 
In Appendix \ref{appendixPDE}  the more classical formulation of the model,  taking the form of a first order PDE involving non-local functionals, is studied. In particular we show that the conditions guaranteeing the existence of stationary population densities (with respect to height) coincide with the conditions guaranteeing non-trivial stationary birth rates in the delay formulation. This makes sense since both formulations model the same phenomena (although they are independently derived from biological assumptions).  Such a phenomenological relation between the two formulations suggests that the stability results for the delay formulation can be translated to the PDE formulation (as indeed is done in \cite{BCDF2022}). Although this issue is not addressed rigorously in the present paper, some comments are included in the concluding remarks section.

\section{Two different formulations of a structured population model}\label{sectionDelayForm}

The classical formulation of the model we study here is derived by imposing a conservation law that leads to the (non-local, quasi-linear and first-order) partial differential equation:
\begin{equation}\label{pde}
\begin{aligned}
\frac{\partial }{\partial t} u(x,t)  + \frac{\partial }{\partial x}\left(g(E(x,t))u(x,t) \right) + \mu u(x,t) & = 0, \\
 g(E(x_{m},t)) u(x_{m},t) & =  \displaystyle\int_{x_m}^{\infty} \beta(y) u(y,t)\,\ud y, \\
 E(x,t) & =\int_x^\infty u(y,t)\,\ud y.
   \end{aligned}
\end{equation}
Our motivation to study the specific model above is to understand whether the evolution (the interpretation of $x$ and $u$ is explained below) of a tree population can be explained by taking into account only competition for light through a hierarchical structure affecting individual growth,  assuming that resources (such as water,  space,  etc.) are readily available.  
Indeed, we assume that the growth rate $g$ of an individual of height $x$ does not depend on $x$ directly, but only indirectly, as it depends on the amount of light the individual receives per unit of time. We assume that the latter, in turn, is fully determined by the number $E(x,t)$ of individuals that are taller than $x$ (we call $E$ an interaction variable, since it mediates how the environmental condition, here light intensity, is influenced by the extant population). We assume that the per capita death rate $\mu$ and the per capita reproduction rate $\beta$ only depend on the height $x$. In fact we assume that $\mu$ is constant, i.e., independent of $x$, while $\beta$ is a non-decreasing function of $x$. We assume that all individuals are born with the minimal height $x_m$ and that $g$ is positive (we do not impose an upper bound on height). The assumptions that $\mu$ and $\beta$ do not depend on the environment $E$ allows us to derive fairly explicit stability criteria,  as we will show later on. 

In \eqref{pde} the second equation stands for the flux of newborns, offspring
of individuals of any size $y$ which have a size specific per capita
fertility (obviously nonnegative) denoted by $\beta$. Notice that the fertility is indeed indirectly affected by negative density dependence since a larger value of the environmental variable leads to a smaller size achieved by the individuals. From a dynamical point of view solutions of (\ref{pde}) can be seen as orbits $t\mapsto u(\cdot,t)$ in the space of integrable functions with respect to height, i.e.  in $L^1(x_m,\infty)$.

A more general model,  when both $\mu$ and $\beta$ are functions of size $x$, and  the (somewhat more general)   environmental interaction variable 
$$E(x,t)=\alpha\int_0^x u(y,t)\,\ud y+\int_x^M u(y,t)\,\ud y,\quad \alpha\in [0,1],$$ 
(but with finite maximal size $M$) was studied  in \cite{cushing1994, AD,AI,FH3},  and a very general model incorporating distributed recruitment in \cite{CS2}.  In \cite{kraev2001} the well-posedness of the above problem was proven by rewriting the system in terms of characteristic coordinates.  
 We note that the focus in \cite{AD,AI} was on numerical approximation of solutions of the hierarchical model.  On the other hand,  in \cite{FH3} the authors derived a formal linearisation of the model and studied regularity properties of the governing linear semigroup.  A characteristic equation was also deduced for the special case when neither the growth rate $g$ nor the mortality rate $\mu$  depend on the interaction variable $E$ ($\beta$ on the other hand does).  Note however that the linearisation and stability results established in \cite{FH3} were completely formal,  as the Principle of Linearised Stability has not been established for the PDE formulation \eqref{pde}.  This is the main reason why in the current work we employ a different formulation of the model.  
 
 Specifically,  to derive a delay formulation of the model,  we assume (as in the case of the PDE)  that individuals are fully characterized by a variable $x$,  taking values in $\mathbb{R}_+$. In general, $x$ is called i-state but here, for clarity, we call it ‘height’, the point being that we motivate our assumptions about interaction in terms of competition for light (this phenomenon is also addressed mathematically in \cite{kraev2001, zavala2007, magal2017}, among others).  We assume that a density function $u=u(x, t)$ exists such that the integral of $u$ with respect to the first variable over an interval gives the number of individuals with size within this interval at time $t$. This allows us to write
\begin{equation}\label{environmentalVariable}
E(x,t) = \int_{x}^\infty  u(s,t) ds,
\end{equation}
so that the height of an individual evolves according to
\begin{equation}\label{growthequation}
X'(t)=g(E(X(t),t)).
\end{equation}

Let $B(t)$ denote the population birth rate at time $t$. Then $B$ equals the influx at $x_m$, which originates from reproduction by the extant population:
\begin{equation}\label{birthrateFormula}
B(t)=\int_0^\infty \beta(y)u(y,t)dy.
\end{equation}
Let $n(t,\cdot)$ denote the age density.   We do not need to write a PDE and solve it in order to conclude that
\begin{equation}\label{agePopDensity}
n(t,a) = B(t-a) e^{- \mu a}.
\end{equation}
This allows us to rewrite (\ref{birthrateFormula}) as
\begin{equation}\label{birthRateFormula1}
B(t) =    \int_0^\infty \beta( S(a,t) ) B( t - a ) e^{- \mu a} da,
\end{equation}
with $S(a,t)$ specifying the height of an individual having age $a$ at time $t$ (and hence being born at time $t - a$).

We refer to section III.4 of \cite{MD1986}, entitled ``Integration along characteristics, transformation of variables, and the following of cohorts through time'',   for general considerations about switching between size- and age-densities. Here the situation is relatively simple,  since  individuals taller than you are exactly those that are older than you,  i.e., were born earlier than you.  Or,  in a formula
\begin{equation}\label{environmentFormula1}
E(x,t) = \int_\tau^\infty B(t - \alpha) e^{-\mu \alpha}  d\alpha,
\end{equation}
when $x=S(\tau,t)$.  (For intricacies arising when individuals of different age can have the same size see \cite{thieme1988}.)

Next note that an individual that was born at time $t - a$ has age $\tau$ at time $t - a + \tau$. The height $y=y(\tau) := S(\tau, t - a + \tau)$ of such an individual evolves according to
\begin{equation}\label{heightODELaw}
\begin{aligned}
\frac{dy}{d\tau}(\tau) & =  g( E( y(\tau), t - a + \tau))\\
& = g( E(S(\tau, t - a + \tau), t - a + \tau) ) \\
& = g\left( \int_\tau^\infty B(t - a + \tau - \alpha) e^{-\mu \alpha  }d\alpha\right).
\end{aligned}
\end{equation}
Noting that $y(0) = x_m$  we obtain by integration that 
\begin{equation}\label{heightFunction}
\begin{aligned}
S(a,t) & = y(a) \\
& = x_m + \int_0^a g\left(\int_\tau^\infty B(t - a + \tau - \alpha) e^{-\mu \alpha}  d\alpha\right) d\tau.
\end{aligned}
\end{equation}
Inserting (\ref{heightFunction}) into (\ref{birthRateFormula1}) we obtain
\begin{equation} \label{scalar}
B(t)= \int_0^{\infty} \beta \bigg( x_m + \int_0^a g \left(
\int_{\tau}^{\infty} e^{-\mu \alpha} B_t(\tau-a-\alpha) \mbox{d}
\alpha \right) \, \mbox{d}\tau \bigg)\,\, e^{-\mu a}B_t(-a) \,
\mbox{d}a,
\end{equation}
where
\begin{equation}
B_t(\theta):=B(t+\theta).
\end{equation}
Notice that (\ref{scalar}) can also be written as
\begin{equation}  \label{scalar2}
B(t)= \int_0^{\infty} \beta \left( x_m + \int_0^a g \left( e^{- \mu(\tau -a)}
\int_{a}^{\infty} e^{-\mu s} B_t(-s) \mbox{d}s
 \right) \, \mbox{d}\tau \right)\,\, e^{-\mu a}B_t(-a) \,
\mbox{d}a.
\end{equation}

\section{The dynamical systems framework}

Equation $(\ref{scalar2})$ provides the delay formulation of the model, which we are going to study here.   In the delay formulation the state variable is the population birth rate history $B_t:=B(t+\cdot)$, instead of the population density $u(\cdot,t)$ with respect to height. Specifically, one can consider the state space (of the weighted birth rate histories)
\begin{equation*}
\mathcal{X}= L_\rho^1(-\infty,0):= \left\{ \phi \in L_{loc}^1(- \infty, 0):\left\vert \left\vert \phi \right\vert  \right\vert _{\mathcal{X}} = \int_{-\infty}^0 e^{\rho s}\vert \phi(s)\vert  ds < \infty \right\},
 \end{equation*}
for some $\rho > 0$ (so $\mathcal{X}$ contains, in particular, constant functions, and therefore the possible steady states) and the delay equation $B(t) = \mathcal{F} (B_t)$ with $\mathcal{F}: \mathcal{X} \rightarrow \mathbb{R}$ defined by
 \begin{equation}\label{scalar3}
 \mathcal{F}(\phi) = \int_0^{\infty} \beta \left( x_m + \int_0^a g \left( e^{- \mu(\tau -a)}
 \int_{a}^{\infty} e^{-\mu s} \phi (-s) \,\mbox{d}s
 \right) \, \mbox{d}\tau \right)\,\, e^{-\mu a} \phi(-a) \,
 \mbox{d}a.
\end{equation}
We denote by $\mathcal{X}^+$ the standard positive cone of  $\mathcal{X}$. 

As discussed in \cite{Diek2, Diek1}, the delay equation $B(t) = \mathcal{F} (B_t)$, together with an initial history $B_0=\phi \in \mathcal{X}$, can be interpreted as an abstract Cauchy problem with a semilinear structure:
\begin{equation}\label{ACPdelay}
\left\{
\begin{aligned}
& \frac{d}{dt} \varphi(t) = A \varphi(t) + \mathcal{F}(\varphi(t))\delta_0\\
& \varphi(0)=\phi\in\mathcal{X} 
\end{aligned}
\right. ,
\end{equation}
where $A$ is the generator of the linear semigroup defined as $T_A(t)\phi := \phi(t+\cdot)\mathds{1}_-(t+\cdot)$. Notice that the mapping $t \mapsto T_A(t)\phi$ tells us how the population birth rate history would evolve without considering birth (and growth and mortality), since all these processes are summarised in the $\mathcal{F}$ function. This  setting makes it possible to analyse the well posedness of the problem and some dynamical properties by means of a generalised variation of constants equation. The standard variation of constants equation cannot be applied in a straightforward manner (as in \cite{pazy1983}) since the semilinear term of the problem (namely $\phi \mapsto\mathcal{F}(\phi)\delta_0$) does not take values in $\mathcal{X}$, but in the space of measures.

Here the theory included in the references mentioned above (\cite{Diek2} and \cite{Diek1}) applies provided that $\mathcal{F}$ is continuously differentiable, which is stated in the following theorem, and proved in Appendix \ref{appendixB}. We assume that $g$ is smoothly extended to the whole of $\mathbb{R}$, implying that the right hand side of \eqref{scalar3} is defined on the whole Banach space $\mathcal{X}$ (so even for non positive $\phi$). Of course negative birth rates do not have biological meaning, but they allow us to work on the whole space (recall that the positive cone of $L^1$ has empty interior).

\begin{theorem}\label{differentiability}
	 Assume that $g : \mathbb{R} \rightarrow \mathbb{R} $ and $\beta : \mathbb{R}^{+} \rightarrow \mathbb{R}$ have a bounded and globally Lipschitzian first derivative. Also assume that $g$ is bounded and positive  and that $\beta$ is non-negative.
	Then the map $\mathcal{F}:\mathcal{X}\rightarrow \mathbb{R}$ defined in (\ref{scalar3}) is continuously differentiable with bounded derivative provided that $ \rho < \mu/5.$
\end{theorem}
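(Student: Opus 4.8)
The plan is to establish the three standard facts that together give $\mathcal F\in C^1$: that the formal derivative $D\mathcal F(\phi)$ is, for each $\phi$, a bounded linear functional on $\mathcal X$; that it is genuinely the Fr\'echet derivative; and that $\phi\mapsto D\mathcal F(\phi)$ is norm continuous into $\mathcal X^*$. First I would fix notation by writing $\mathcal F(\phi)=\int_0^\infty \beta\big(x_m+G[\phi](a)\big)\,e^{-\mu a}\phi(-a)\,\mathrm{d}a$, with the nested maps $G[\phi](a)=\int_0^a g\big(e^{-\mu(\tau-a)}J[\phi](a)\big)\,\mathrm{d}\tau$ and $J[\phi](a)=\int_a^\infty e^{-\mu s}\phi(-s)\,\mathrm{d}s$. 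The estimate on which everything rests is that, since $\rho<\mu$, the functional $\phi\mapsto J[\phi](a)$ is bounded, with $|J[\phi](a)|\le e^{-(\mu-\rho)a}\,\|\phi\|_{\mathcal X}$ (write $e^{-\mu s}=e^{-(\mu-\rho)s}e^{-\rho s}$ and pull $e^{-(\mu-\rho)s}\le e^{-(\mu-\rho)a}$ out of the integral over $s\ge a$). Because $g$ is bounded this already gives $0<G[\phi](a)\le a\,\|g\|_\infty$ uniformly in $\phi$, and because $\beta'$ is bounded $\beta$ grows at most linearly, so $|\beta(x_m+G[\phi](a))|\le c_0+c_1 a$ uniformly in $\phi$; combined with the factor $e^{-\mu a}$ and $\rho<\mu$, the integrand of $\mathcal F$ is thus dominated by $(c_0+c_1 a)e^{-(\mu-\rho)a}\cdot e^{-\rho a}|\phi(-a)|$, and $\mathcal F$ is well defined on all of $\mathcal X$ (indeed $|\mathcal F(\phi)|\le C\|\phi\|_{\mathcal X}$).

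Next I would differentiate under the integral sign. The candidate derivative $D\mathcal F(\phi)\psi$ is the sum of a chain-rule term $\int_0^\infty \beta'\big(x_m+G[\phi](a)\big)\,\big(DG[\phi](a)\psi\big)\,e^{-\mu a}\phi(-a)\,\mathrm{d}a$, with $DG[\phi](a)\psi=J[\psi](a)\int_0^a g'\big(e^{-\mu(\tau-a)}J[\phi](a)\big)\,e^{-\mu(\tau-a)}\,\mathrm{d}\tau$, and a product-rule term $\int_0^\infty \beta\big(x_m+G[\phi](a)\big)e^{-\mu a}\psi(-a)\,\mathrm{d}a$. The product-rule term is bounded by the dominated-integrand estimate just obtained, uniformly in $\phi$. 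For the chain-rule term I would use $|g'|\le\|g'\|_\infty$, the identity $\int_0^a e^{-\mu(\tau-a)}\,\mathrm{d}\tau=\mu^{-1}(e^{\mu a}-1)$ and $|J[\psi](a)|\le e^{-(\mu-\rho)a}\|\psi\|_{\mathcal X}$, which give $|DG[\phi](a)\psi|\le \mu^{-1}\|g'\|_\infty\,e^{\rho a}\|\psi\|_{\mathcal X}$; a Fubini swap then rewrites the chain-rule term as $\int_0^\infty\big(\int_0^s L[\phi](a)\,\mathrm{d}a\big)e^{-\mu s}\psi(-s)\,\mathrm{d}s$ for an explicit $L[\phi]$, so $D\mathcal F(\phi)$ is represented by the kernel $m_\phi(-s)=e^{-\mu s}\beta(x_m+G[\phi](s))+e^{-\mu s}\int_0^s L[\phi](a)\,\mathrm{d}a$ and $\|D\mathcal F(\phi)\|_{\mathcal X^*}=\esssup_{s>0}e^{\rho s}|m_\phi(-s)|$. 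Bounding this $\esssup$ from the exponential estimates above (balancing the factors $e^{\mu(a-\tau)}$, $e^{-(\mu-\rho)a}$, $e^{-\mu a}$ and the dual weight $e^{\rho s}$, where the slack in $\rho<\mu/5$ is what makes the various quantities finite) shows $D\mathcal F(\phi)\in\mathcal X^*$; since the chain-rule term carries the extra factor $\phi(-a)$, the natural bound one obtains is affine in $\|\phi\|_{\mathcal X}$, in particular uniform on bounded sets.

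To upgrade this to Fr\'echet differentiability I would estimate the remainder $\mathcal F(\phi+\psi)-\mathcal F(\phi)-D\mathcal F(\phi)\psi$. Two consequences of the hypotheses do the work: from the bounded derivatives, $g$ and $\beta$ are globally Lipschitz, so $|G[\phi+\psi](a)-G[\phi](a)|\le\mu^{-1}\|g'\|_\infty\,e^{\rho a}\|\psi\|_{\mathcal X}$ and similarly for $\beta\circ(x_m+G[\,\cdot\,])$; and from the \emph{global} Lipschitz continuity of $g'$ and $\beta'$ one has the quadratic Taylor estimate $|f(x+h)-f(x)-f'(x)h|\le\tfrac12\,\mathrm{Lip}(f')\,h^2$ for $f\in\{g,\beta\}$. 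Feeding these into the two places where nonlinearity enters ($g$ inside $G$, $\beta$ outside) and expanding the product in the integrand, one gets a remainder of size $O(\|\psi\|_{\mathcal X}^2)$ --- again with the exponential bookkeeping controlled by $\rho<\mu/5$ --- hence $o(\|\psi\|_{\mathcal X})$. Norm continuity of $\phi\mapsto D\mathcal F(\phi)$ follows by the same type of estimate applied to the difference of kernels $m_{\phi_1}-m_{\phi_2}$, bounding each of the finitely many terms by a constant times $\|\phi_1-\phi_2\|_{\mathcal X}$ using the Lipschitz continuity of $\beta$, $\beta'$, $g$ and $g'$.

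The step I expect to be the main obstacle is exactly this exponential bookkeeping. The weight $e^{\rho s}$ defining $\mathcal X$ is a forward-decay weight, whereas the change of variables that produced $G$ introduces the factor $e^{-\mu(\tau-a)}=e^{\mu(a-\tau)}$, which grows up to $e^{\mu a}$ and threatens to overwhelm the decay coming from mortality; every estimate relies on the surplus $\mu-\rho$ (and its iterates through the nested integrals) beating these growths, and $\rho<\mu/5$ is what guarantees this with room to spare. A secondary subtlety is that $\beta$ is not assumed bounded, only its derivative, so $\beta(x_m+G[\phi](a))$ cannot be controlled directly; one must instead use that $g$ bounded forces $G[\phi](a)=O(a)$ uniformly in $\phi$, after which the factor $e^{-\mu a}$ absorbs the at most linear growth of $\beta$.
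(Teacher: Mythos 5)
Your plan is correct, and it takes a genuinely different route from the paper's proof, so a comparison is in order. The paper does not differentiate under the integral sign directly: it writes $\mathcal F(\phi)=\langle\tilde{\mathcal F}(\phi),\phi\rangle$ with $\tilde{\mathcal F}(\phi)(a)=e^{-\mu a}\beta\big(x_m+\int_0^a g(\cdot)\,\mathrm{d}\tau\big)$, factors $\tilde{\mathcal F}=\mathcal B\circ\mathcal L_2\circ\mathcal G\circ\mathcal L_1$ through auxiliary weighted sup-norm spaces $Y,Z,W$ (with weights $\rho$, $\rho_1$, $\rho_2$), proves each factor is $C^1$ with a point-independent derivative bound, and then invokes the chain rule together with the product/duality rule \eqref{differential}; the constraint $\rho<\mu/5$ is exactly the compounding of the requirements $\rho_1\geq 2\rho$, $\rho_2>\rho_1$ and $\rho+2\rho_2\leq\mu$. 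Your direct argument uses the same two analytic ingredients as the paper — the quadratic Taylor estimate $|f(x+h)-f(x)-f'(x)h|\leq C h^2$ furnished by the global Lipschitz continuity of $g'$ and $\beta'$ (the paper's \eqref{estimate}), and the at-most-linear growth of $\beta$ absorbed by $e^{-\mu a}$ — but organises them by exhibiting the representing kernel $m_\phi\in L^\infty_\rho(0,\infty)$ via a Fubini swap and estimating the remainder and the kernel differences by hand. What the paper's modular route buys is a clean separation of concerns and uniform derivative bounds for each factor; what your route buys is less lossy exponent bookkeeping: chasing your estimates, the kernel and cross terms need only $2\rho\leq\mu$ and the quadratic remainder terms roughly $3\rho\leq\mu$, so $\rho<\mu/5$ is satisfied with room to spare (the paper's $\mu/5$ is an artifact of forcing the intermediate compositions into spaces of exponential growth classes $2\rho$ and $\rho_2$). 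Two small caveats: your bound on $\|D\mathcal F(\phi)\|_{\mathcal X'}$ is affine in $\|\phi\|_{\mathcal X}$, i.e. uniform only on bounded sets, but the paper's own final assembly via \eqref{differential} yields a bound of exactly the same form, so you match what the paper actually establishes; and your sketch leaves the final $\esssup$ bookkeeping schematic, though each inequality you would need does follow from $\rho<\mu/5$.
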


\begin{theorem}
\textbf{Existence and uniqueness} Under the hypotheses of the previous theorem, for any $\phi \in \mathcal{X},$ there exists a unique $B \in L_{loc}^1(\mathbb{R})$ such that $B(t) = \phi(t)$ for $t < 0$ and $B(t)$ satisfies \eqref{scalar2} for $t \geq 0$.  Moreover, $B$ belongs to the positive cone $\mathcal{X}^+$  of $L_{loc}^1(\mathbb{R})$  whenever $\varphi \in \mathcal{X}^{+}$. 
	\end{theorem}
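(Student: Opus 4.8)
The plan is to solve the renewal equation \eqref{scalar2} by a Banach fixed-point argument on a short \emph{uniform} time step and then continue the solution to $[0,\infty)$. Fix $\phi\in\mathcal{X}$ and $h>0$, and regard a candidate solution on $[0,h]$ as an element $B\in L^1(0,h)$; extend it to $(-\infty,h]$ by $B(t):=\phi(t)$ for $t<0$, so that the translate $B_t\in\mathcal{X}$ is well defined for every $t\in[0,h]$ and $t\mapsto B_t$ is continuous from $[0,h]$ into $\mathcal{X}$ by strong continuity of translation in the weighted $L^1$ space. Define $(\Psi B)(t):=\mathcal{F}(B_t)$. By Theorem~\ref{differentiability}, $\mathcal{F}$ is globally Lipschitz on $\mathcal{X}$ with a constant $L$ independent of $\phi$, and $\mathcal{F}(0)=0$, so $t\mapsto\mathcal{F}(B_t)$ is bounded and hence $\Psi$ maps $L^1(0,h)$ into itself. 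The key estimate is that if $B,\tilde B$ agree with $\phi$ on $(-\infty,0)$ then, since $B_t-\tilde B_t$ is supported on $[-t,0]$ and the weight $e^{\rho s}\le 1$ there, $\|B_t-\tilde B_t\|_{\mathcal{X}}\le\|B-\tilde B\|_{L^1(0,t)}$; integrating, $\|\Psi B-\Psi\tilde B\|_{L^1(0,h)}\le hL\,\|B-\tilde B\|_{L^1(0,h)}$. Thus for any $h<1/L$ the map $\Psi$ is a contraction on $L^1(0,h)$, and its unique fixed point is the solution of \eqref{scalar2} on $[0,h]$.

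For the continuation I would first check that the new history lies again in $\mathcal{X}$, which is immediate from $\|B_h\|_{\mathcal{X}}=e^{-\rho h}\bigl(\|\phi\|_{\mathcal{X}}+\int_0^h e^{\rho r}|B(r)|\,dr\bigr)<\infty$, and then repeat the fixed-point construction on $[h,2h]$ with $B_h$ in the role of $\phi$, and so on. Since the step length $h$ depends only on $L$, it may be fixed once and for all, so after finitely many steps the solution is defined on any prescribed interval $[0,T]$, hence on $[0,\infty)$; gluing the locally unique pieces yields the global solution in $L^1_{loc}(\mathbb{R})$ and its uniqueness (two such solutions coincide on $[0,h]$ by uniqueness of the fixed point, hence have the same history at $h$, hence coincide on $[h,2h]$, and so on). A one-line alternative is to invoke directly the abstract semilinear framework of \cite{Diek1,Diek2} for the Cauchy problem \eqref{ACPdelay}, which applies precisely because $\mathcal{F}\in C^1$; the stepwise contraction is simply a self-contained version of it, and it makes transparent that the global Lipschitz constant is what rules out finite-time blow-up.

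It remains to prove positivity. If $\phi\in\mathcal{X}^{+}$ then, since $\beta\ge 0$ on $\mathbb{R}^{+}$ and the (extended) $g$ is positive, the argument of $\beta$ in \eqref{scalar3} stays $\ge x_m$, so $\mathcal{F}$ is nonnegative on $\mathcal{X}^{+}$ and therefore $\Psi$ maps the closed cone $L^1_{+}(0,h)$ into itself. Applying Banach's fixed point theorem on this complete metric subspace shows that $B\ge 0$ a.e.\ on $[0,h]$; then $B_h\in\mathcal{X}^{+}$, and the same conclusion propagates to every subsequent interval, giving $B\ge 0$ a.e.\ on $\mathbb{R}$.

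The construction is routine once Theorem~\ref{differentiability} is in hand; the two points I would treat with care are (a) the infinite-delay bookkeeping — ensuring the translates $B_t$ stay in $\mathcal{X}$ and depend continuously on $t$, which is where the weight $e^{\rho s}$ and the restriction $\rho<\mu/5$ enter through Theorem~\ref{differentiability} — and (b) the fact that the Lipschitz bound on $\mathcal{F}$ holds on all of $\mathcal{X}$ rather than merely on the cone, which is exactly what lets the time step $h$ be chosen independently of the data and hence the solution be extended to all of $[0,\infty)$. I expect (b) to be the conceptually important ingredient; positivity is then an easy add-on via invariance of the cone.
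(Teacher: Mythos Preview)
Your argument is correct. The paper's own proof is much terser: it simply invokes the abstract framework of \cite{Diek1} (Theorems~2.2 and~3.15 there) for the semilinear Cauchy problem \eqref{ACPdelay}, noting that Theorem~\ref{differentiability} supplies the global Lipschitz continuity of $\mathcal{F}$ needed as input, and deduces positivity from positivity of the linear semigroups together with $\mathcal{F}(\mathcal{X}^+)\subset\mathbb{R}^+$. Your stepwise Banach fixed-point construction is exactly the self-contained version of that machinery you yourself flag as the ``one-line alternative''; what it buys is transparency about why the global Lipschitz bound (rather than mere $C^1$) gives a uniform step length and hence global existence, while the paper's approach buys brevity by outsourcing these details to \cite{Diek1}. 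The only minor point worth tightening is the measurability of $t\mapsto\mathcal{F}(B_t)$: you assert continuity of $t\mapsto B_t$ in $\mathcal{X}$, which is true but deserves a word (density of compactly supported functions plus the weight), since boundedness alone does not guarantee $\Psi B\in L^1(0,h)$.
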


\begin{proof}
	It is an immediate consequence of Theorem \ref{differentiability} (notice that a bounded derivative implies global Lipschitz continuity), Theorem 3.15 in \cite{Diek1} (which implies the equivalence between (\ref{scalar2}) and (\ref{ACPdelay})) and Theorem 2.2 in \cite{Diek1} (which implies the existence and uniqueness of mild solutions of (\ref{ACPdelay}) and the generation of a nonlinear semigroup $\Sigma(t;\phi)$ satisfying $\Sigma(t;\phi)=B_t$). The facts that the linear semigroups in Theorem 2.2 of \cite{Diek1} are positive and $\mathcal{F}$ maps the positive cone of  $\mathcal{X}$ to $\mathbb{R}^{+}$ imply that $B$ belongs to the positive cone whenever $\phi \in \mathcal{X}^{+}$.
\end{proof}

Let $B\in\mathbb{R}$ be a stationary population birth rate, i.e. $B$ satisfies $B=\mathcal{F}(\bar{B})$ where $\bar{B}\in \mathcal{X}$ is defined by $\bar{B}(\theta)=B$ for (almost) all $\theta\in(-\infty, 0)$. The following theorem determines the local stability of $\bar{B}$ in terms of properties of $D\mathcal{F}(\bar{B})$. Since $D\mathcal{F}(\bar{B})$ is a bounded linear operator from $\mathcal{X}$ to $\mathbb{R}$, the Riesz Representation Theorem implies that $D\mathcal{F}(\bar{B})$ can be written as
\[
D\mathcal{F}(\bar{B})\phi = \int_0^\infty k(s)\phi(-s)ds =: \left< k,\phi\right>
\]
with $k$ an element of the dual space of $\mathcal{X}$, represented by
\[
\mathcal{X}'= L_\rho^\infty(0,\infty):= \left\{ f \in L^\infty(0, \infty):\left\vert \left\vert  f \right\vert  \right\vert_{\mathcal{X'}} = \sup_{s\in(0,\infty)} e^{\rho s}\vert f(s)\vert < \infty \right\}.
\]

\begin{theorem}\label{StabilityTheorem} (Theorem 3.15 in \cite{Diek1}) Under the hypotheses of Theorem \ref{differentiability}, let $\bar{B}\in \mathcal{X}$ be a stationary state of (\ref{scalar2}) and let $k\in \mathcal{X}'$ represent $DF(\bar{B})$. Consider the characteristic equation
 \begin{equation}\label{charEqTheorem}
0=1-\hat{k}(\lambda),
\end{equation}
where $\hat{k}$ is the Laplace transform of $k$ (i.e. $\hat{k}(\lambda)=\int_0^\infty e^{-\lambda s}k(s)ds $). 
\begin{itemize}
\item[(a)] If all roots of the characteristic equation (\ref{charEqTheorem}) have negative real part, then the stationary state $\bar{B}$ is locally asymptotically stable. 
\item[(b)] If there exists at least one root with positive real part, then the steady state $\bar{B}$ is unstable.
\end{itemize}
\end{theorem}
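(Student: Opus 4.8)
Since the statement is precisely the specialisation of the abstract principle of linearised stability for delay equations (Theorem~3.15 of \cite{Diek1}) to the concrete renewal equation \eqref{scalar2}, the plan is to verify that the abstract hypotheses hold and then to invoke that theorem; what follows is a sketch of how the pieces fit together. First I would record that, by Theorem~\ref{differentiability}, with the standing choice $\rho < \mu/5$ the map $\mathcal{F}$ in \eqref{scalar3} is continuously Fréchet differentiable on $\mathcal{X}$ with bounded derivative. Consequently the superposition term $\phi \mapsto \mathcal{F}(\phi)\delta_0$ occurring in \eqref{ACPdelay} is continuously differentiable as a map from $\mathcal{X}$ into the space of bounded measures, which is the relevant part of the sun-star dual $\mathcal{X}^{\odot\ast}$; its derivative at the stationary state $\bar B$ is the bounded functional $\phi \mapsto \langle k,\phi\rangle$ composed with multiplication by the fixed element $\delta_0$. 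This is exactly the input required by the dual-perturbation (``sun-star'') calculus of \cite{Diek2,Diek1}.

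Next I would pass to the linearisation. Since $\bar B$ is an equilibrium of the nonlinear semigroup $\Sigma(t;\cdot)$ and the perturbation is $C^1$, the sun-star machinery of \cite{Diek1} gives that $\Sigma(t;\cdot)$ is Fréchet differentiable at $\bar B$, with derivative the $C_0$-semigroup $T(t)$ on $\mathcal{X}$ generated by $A + \langle k,\cdot\rangle\delta_0$, i.e.\ the solution semigroup of the \emph{linear} renewal equation $b(t)=\int_0^\infty k(s)\,b(t-s)\,\ud s$. I would then identify the spectrum of its generator. The unperturbed semigroup satisfies $\|T_A(t)\| = e^{-\rho t}$ on $\mathcal{X}=L^1_\rho$, so has growth bound $-\rho$, and the perturbation acts only through the single functional $\langle k,\cdot\rangle$ and the fixed element $\delta_0$; by the perturbation results of \cite{Diek1} the essential growth bound of $T(t)$ is still at most $-\rho$, while in the half-plane $\{\operatorname{Re}\lambda > -\rho\}$ the spectrum of the generator consists solely of eigenvalues. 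A direct resolvent computation, carried out in full generality in \cite{Diek1}, shows that $\lambda$ is such an eigenvalue exactly when $1-\hat k(\lambda)=0$; here $\hat k$ is analytic on $\{\operatorname{Re}\lambda>-\rho\}$ because $k\in\mathcal{X}'=L^\infty_\rho$, and $\hat k(\lambda)\to 0$ as $\operatorname{Re}\lambda\to+\infty$, so only finitely many roots can lie in any half-plane $\{\operatorname{Re}\lambda\ge -\rho+\varepsilon\}$.

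Putting this together yields both conclusions. If every root of \eqref{charEqTheorem} has negative real part, then the spectral bound of the generator of $T(t)$ equals $\max\{-\rho,\ \sup\{\operatorname{Re}\lambda : 1=\hat k(\lambda)\}\} < 0$, so $T(t)$ is exponentially stable, and the principle of linearised stability (Theorem~3.15(a) of \cite{Diek1}, applicable thanks to the $C^1$-property) gives that $\bar B$ is locally asymptotically stable. If instead some root has positive real part, the generator has an eigenvalue in the open right half-plane, and Theorem~3.15(b) of \cite{Diek1} transfers this instability to the nonlinear semigroup. The only model-specific work is the verification of the $C^1$-hypothesis, which is Theorem~\ref{differentiability} (proved in Appendix~\ref{appendixB}); the step I would expect to be worth making explicit is the identification of the point spectrum of the linearised generator with the zero set of $1-\hat k$ together with the control of the essential spectrum that makes ``finitely many dominant roots'' meaningful, though both are available in full generality in \cite{Diek1} and hence need only be cited.
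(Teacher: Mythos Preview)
Your sketch is correct, but note that the paper gives \emph{no} proof of this theorem: it is simply quoted verbatim as Theorem~3.15 of \cite{Diek1}, with the understanding that the only model-specific hypothesis to check is the $C^1$-smoothness of $\mathcal{F}$, which is precisely Theorem~\ref{differentiability}. What you have written is therefore a (correct) elaboration of why the cited abstract result applies---the sun-star framework, the identification of the point spectrum with the zero set of $1-\hat k$, and the control of the essential growth bound---rather than a reconstruction of something the paper proves; all of that machinery lives in \cite{Diek1,Diek2} and the paper treats it as a black box.
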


\section{Existence and characterization of steady states}\label{section_EofSS}

A stationary solution of the problem can be found by simply assuming
that $B$ in (\ref{scalar2}) is independent of $t$. Of course there is a trivial
stationary solution $B=0$ that corresponds to the absence of
individuals. When dealing with non-trivial stationary solutions of \eqref{scalar2}, we make the following abuse of notation to ease readability: we use $\bar{B}$ to denote a constant function in $\mathcal{X}$ and $\bar{B}\in\mathbb{R}$ as the image it takes (so that we let the context tell whether $\bar{B}$ refers to the constant function or to the value it takes). With this in mind, and taking into account \eqref{scalar2}, it follows that a non-trivial equilibrium $\bar{B}\in\mathcal{X}$ is a constant function whose image is a non-zero solution of
\begin{equation}\label{def_R}
1= \int_0^{\infty} \beta \left( x_m + \int_0^a g \left( B
\frac{e^{-\mu \tau}}{\mu} \right) \, \mbox{d}\tau \right)\,\, e^{-\mu a}
\, \ud a =: R(B).
\end{equation}
Under natural hypotheses concerning $\beta$ and $g$, which
essentially amount to assuming that larger sizes correspond to larger
fertilities, that more competition (more individuals higher in the hierarchy than
the one we are observing) means slower growth, and that the first generation progeny of
an individual is finite (more precisely, that $R(0)<\infty$), we readily obtain the
following theorem.

\begin{theorem} \label{R0}
	Under the hypotheses of Theorem 1 and the assumptions that $\beta$ is a strictly increasing function on $[x_m,\infty)$, and that  $g$ is a strictly decreasing function on $[0, \infty)$, there exists a non-trivial equilibrium of \eqref{scalar2} if and only if
		\begin{equation*}
		\begin{aligned}
		R_0:= R(0) = &  \int_0^{\infty} \beta \big( x_m + g(0)a \big)\, e^{-\mu a}\,  \ud a >1\qquad\text{and}  \\ 
		 & \int_0^{\infty} \beta \big( x_m + g(\infty)a \big)\, e^{-\mu a}\,
		\ud a <1,
		\end{aligned}
		\end{equation*}
		and there is at most one such non-trivial equilibrium.
	\end{theorem}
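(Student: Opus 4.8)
The plan is to regard the quantity $R$ defined in~\eqref{def_R} as a real-valued function of $B$ on the half-line $[0,\infty)$ --- only non-negative values of $B$ are relevant, a non-trivial equilibrium being, by~\eqref{def_R}, precisely a positive solution of $R(B)=1$ --- and to prove that $R$ is continuous and strictly decreasing with $R(0)=R_0$ and
\[
R_\infty:=\lim_{B\to\infty}R(B)=\int_0^\infty \beta\big(x_m+g(\infty)a\big)e^{-\mu a}\,\ud a ,
\]
where $g(\infty):=\lim_{x\to\infty}g(x)$ exists because $g$ is positive and decreasing. Granting this, $R$ maps $(0,\infty)$ bijectively onto the open interval $(R_\infty,R_0)$, so $R(B)=1$ admits a (necessarily unique, by strict monotonicity) solution $B>0$ if and only if $R_\infty<1<R_0$, which is exactly the pair of inequalities in the statement, together with the ``at most one'' assertion.

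First I would check that $R$ is finite and continuous on $[0,\infty)$. Since $g$ is positive and strictly decreasing, $0<g\big(Be^{-\mu\tau}/\mu\big)\le g(0)$ for all $\tau\ge0$ and $B\ge0$, whence $x_m\le x_m+\int_0^a g\big(Be^{-\mu\tau}/\mu\big)\,\ud\tau\le x_m+g(0)a$; as $\beta$ is non-decreasing with bounded derivative, hence of at most linear growth, the integrand in~\eqref{def_R} is dominated by $\beta\big(x_m+g(0)a\big)e^{-\mu a}$, which is integrable over $(0,\infty)$. The same dominating function, together with the continuity of $g$ and $\beta$, lets one pass to the limit under the integral sign (dominated convergence applied first to the inner $\tau$-integral, dominated by the constant $g(0)$ on $[0,a]$, and then to the outer $a$-integral), giving continuity of $R$ on $[0,\infty)$ and in particular $R(0)=R_0<\infty$.

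Next I would establish strict monotonicity. For $0\le B_1<B_2$ and every $a>0$ one has $g\big(B_1e^{-\mu\tau}/\mu\big)>g\big(B_2e^{-\mu\tau}/\mu\big)$ for all $\tau\ge0$ --- strict because $e^{-\mu\tau}/\mu>0$ and $g$ is strictly decreasing --- hence $\int_0^a g\big(B_1e^{-\mu\tau}/\mu\big)\,\ud\tau>\int_0^a g\big(B_2e^{-\mu\tau}/\mu\big)\,\ud\tau$, and then, $\beta$ being strictly increasing on $[x_m,\infty)$, the integrand of $R(B_1)$ strictly exceeds that of $R(B_2)$ for every $a>0$; integrating against the positive measure $e^{-\mu a}\,\ud a$ yields $R(B_1)>R(B_2)$. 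For the limit at infinity, note that for each fixed $\tau\ge0$ we have $Be^{-\mu\tau}/\mu\to\infty$ as $B\to\infty$, so $g\big(Be^{-\mu\tau}/\mu\big)\to g(\infty)$; dominated convergence (dominating constant $g(0)$) gives $\int_0^a g\big(Be^{-\mu\tau}/\mu\big)\,\ud\tau\to g(\infty)a$, and using once more the dominating function $\beta\big(x_m+g(0)a\big)e^{-\mu a}$ for the outer integral we obtain $R(B)\to\int_0^\infty\beta\big(x_m+g(\infty)a\big)e^{-\mu a}\,\ud a$; since $R$ is decreasing this limit equals $\inf_{B\ge0}R(B)$, so $R(B)\in(R_\infty,R_0)$ for every $B>0$ and the range claim, hence the theorem, follows from the intermediate value theorem.

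I do not anticipate a genuine obstacle: the argument is the intermediate value theorem plus strict monotonicity. The only points requiring a little care are the interchanges of limit and integral --- handled by the explicit dominating functions above, which rely on the linear-growth bound on $\beta$ coming from the hypotheses of Theorem~\ref{differentiability} --- and the fact that strict inequalities between integrands survive integration, which holds here because they hold for \emph{every} $a>0$. It is also worth recording that the boundary cases $R_0=1$ and $R_\infty=1$ yield no non-trivial equilibrium, in agreement with the strict inequalities appearing in the statement.
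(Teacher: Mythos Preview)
Your proposal is correct and follows essentially the same route as the paper's proof: both establish that $R$ is well defined, continuous and strictly decreasing on $[0,\infty)$, compute $\lim_{B\to\infty}R(B)$ by a double application of dominated convergence (first on the inner $\tau$-integral with dominating constant $g(0)$, then on the outer $a$-integral), and conclude by the intermediate value theorem. The paper's proof is terse, merely asserting these properties; you have supplied the details it omits, including the explicit dominating function $\beta(x_m+g(0)a)e^{-\mu a}$ and the justification of strict monotonicity, so there is nothing to add.
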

\begin{proof} The hypotheses imply that $R$ is a well defined continuous and strictly decreasing function on $[0,\infty)$.  A double application of the Lebesgue dominated convergence theorem then gives
	\begin{equation*}
		\begin{aligned}
	 \lim_{R \rightarrow \infty} R(B) = & \int_0^{\infty} \beta \left(x_m + \int_0^a \displaystyle\lim_{B \rightarrow \infty} g\left(B \frac {e^{-\mu \tau}}{\mu}\right) \ud \tau  \right)e^{-\mu a} \ud a  \\ 
	 =&  \int_0^{\infty} \beta \left( x_m + g(\infty)a \right)\, e^{-\mu a}\,
	\ud a.	
	\end{aligned}
		\end{equation*}
\end{proof}

\begin{remark}
As usual,  $R_0$ can be interpreted as the so-called \textrm{basic
reproduction number}, i.e., the expected number of offspring
 individuals will produce during their lifetime in a  \textrm{virgin} environment, i.e.,
when there are no individuals older/larger than themselves in the population.
\end{remark}

\subsection{Age and size equilibrium profiles}\label{subsec2}

The age density of a steady state is given by $\bar{n}(a) = \bar{B}e^{-\mu a}$ (see (\ref{agePopDensity})). 
Let us now  set
\begin{equation} \label{size_equilibrium}
\bar{S}(a)= x_m +\int_0^{a} g \left(\int_{\tau}^{\infty} \bar{B}  e^{-\mu \alpha} \mbox{d} \alpha
 \right) \mbox{d} \tau = x_m +\int_0^{a} g \left( \bar{B} \frac {e^{-\mu \tau}}{\mu} \right) \mbox{d} \tau,
\end{equation}
 which is the size of an individual of age $a$ at the nontrivial equilibrium, see (\ref{heightFunction}).  
The density $\bar{u}(x)$ with respect to size of the same population distribution can then be computed by taking into account the equality
\begin{equation*}
\int_{\alpha_1}^{\alpha_2} \bar{n}(a)\, \ud a = \int_{\bar{S}(\alpha_1)}^{\bar{S}(\alpha_2)}  \frac{\bar{n}\big(\bar{S}^{-1}(x)\big)}{\bar{S}'\big(\bar{S}^{-1}(x) \big)} \, \ud x = \int_{\bar{S}(\alpha_1)}^{\bar{S}(\alpha_2)} \bar{u}(x)\, \ud x,
\end{equation*}
which follows from the change of variable $x=\bar{S}(a)$ and the interpretation of $\bar{n}$ and $\bar{u}$. Thus, we find
\begin{equation}\label{size_density}
\bar{u}(x) = \frac{\bar{n}\big(\bar{S}^{-1}(x)\big)}{\bar{S}'\big(\bar{S}^{-1}(x) \big)} = \frac{\bar{B}e^{-\mu \bar{S}^{-1}(x)}}{g\left(\bar{B}\frac{e^{-\mu \bar{S}^{-1}(x)}}{\mu}\right)},
\end{equation}
which is an alternative expression to (\ref{new_steady_state}).

\section{Stability of steady states}\label{section-stability}
The linearisation of \eqref{scalar} around the origin is simply (see \ref{differential} in Appendix \ref{appendixB}),
\begin{equation}\label{linSolAt0}
y(t) = D \mathcal{F}(0) y_t = \int_0^{\infty} \beta \big( x_m + g(0)a \big)\, e^{-\mu a}y(t-a) \, \mbox{d}a =: \int_0^{\infty} k(a) y_t(-a)\, \mbox{d}a
\end{equation}
(as indeed one can understand by using only the interpretation: it describes the linear population model corresponding to the virgin environment $E=0$).
\begin{theorem} \label{linear}
Under the hypotheses of Theorem \ref{differentiability}, the trivial
equilibrium is (locally) exponentially stable if $R_0<1$,  and it is unstable if $R_0>1.$
\end{theorem}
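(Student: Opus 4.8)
The plan is to invoke the principle of linearised stability, Theorem~\ref{StabilityTheorem}, applied to the linearisation \eqref{linSolAt0} of \eqref{scalar2} at the trivial equilibrium. Its representing kernel is $k(a)=\beta(x_m+g(0)a)\,e^{-\mu a}$, so the whole question reduces to locating the roots of the characteristic equation $1=\hat k(\lambda)$ of \eqref{charEqTheorem}, where $\hat k(\lambda)=\int_0^\infty e^{-\lambda a}k(a)\,\ud a$. First I would record the elementary properties of $\hat k$ on the real line. Since $\beta$ has a bounded derivative it grows at most linearly, so $k(a)\le(\beta(x_m)+L\,g(0)a)\,e^{-\mu a}$ with $L=\sup|\beta'|$; hence $\hat k$ is finite and analytic on $\{\operatorname{Re}\lambda>-\mu\}$, in particular $\hat k(0)=R_0<\infty$ as in \eqref{def_R}. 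Restricted to the reals, $\hat k$ is continuous, tends to $0$ as $\lambda\to+\infty$ by dominated convergence, and — provided $k\not\equiv0$ — is strictly decreasing, because differentiation under the integral sign gives $\hat k'(\lambda)=-\int_0^\infty a\,e^{-\lambda a}k(a)\,\ud a<0$. (If $k\equiv0$, i.e.\ $\beta\equiv0$, then $R_0=0$, \eqref{charEqTheorem} has no root, and the origin is trivially exponentially stable; so assume $k\not\equiv0$ from now on.)

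The instability statement is then immediate: if $R_0>1$ then $\hat k(0)=R_0>1$ while $\hat k(\lambda)\to0<1$, so by the intermediate value theorem there is a (unique) $\lambda^\ast>0$ with $\hat k(\lambda^\ast)=1$, i.e.\ a root of \eqref{charEqTheorem} with positive real part, and Theorem~\ref{StabilityTheorem}(b) gives instability.

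For the stability statement, suppose $R_0<1$. The key estimate is that for every $\lambda$ with $\sigma:=\operatorname{Re}\lambda>-\mu$ one has, using $k\ge0$,
\[
|\hat k(\lambda)|=\Bigl|\int_0^\infty e^{-\lambda a}k(a)\,\ud a\Bigr|\le\int_0^\infty e^{-\sigma a}k(a)\,\ud a=\hat k(\sigma).
\]
Because $\hat k$ is continuous and decreasing on the reals with $\hat k(0)=R_0<1$, I can choose $\varepsilon\in(0,\mu)$ with $\hat k(-\varepsilon)<1$; then $|\hat k(\lambda)|\le\hat k(-\varepsilon)<1$, hence $1-\hat k(\lambda)\neq0$, for every $\lambda$ with $\operatorname{Re}\lambda\ge-\varepsilon$. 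Thus \eqref{charEqTheorem} has no root in the half-plane $\{\operatorname{Re}\lambda\ge-\varepsilon\}$; in particular all roots have strictly negative real part, and Theorem~\ref{StabilityTheorem}(a) yields local asymptotic stability. To get the sharper \emph{exponential} stability I would observe that the estimate above provides a genuine spectral gap, uniform in $\operatorname{Im}\lambda$, so that together with the fact that the essential growth bound of the linearised renewal semigroup on $\mathcal{X}=L^1_\rho(-\infty,0)$ does not exceed that of the shift semigroup $T_A$, namely $-\rho$ (with $\rho<\mu/5$ at our disposal), the growth bound of the linearised semigroup is at most $-\min(\varepsilon,\rho)<0$; the sun--star variation-of-constants theory of \cite{Diek1,Diek2} then transfers this to local exponential stability of $\bar B=0$.

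The routine ingredients — at most linear growth of $\beta$, differentiation under the integral sign, dominated convergence — present no difficulty. The one step that deserves care is the last one, namely upgrading ``all characteristic roots lie in the open left half-plane'' to genuine exponential stability: this is precisely where one must use that the bound $|\hat k(\lambda)|\le\hat k(\operatorname{Re}\lambda)$ gives a spectral bound that is strictly negative with margin, and that the essential spectrum of the linearised semigroup is controlled by $\rho$, so that its growth bound — and not merely its spectral bound — is negative, in accordance with the linearised-stability results quoted as Theorem~\ref{StabilityTheorem}.
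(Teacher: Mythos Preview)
Your proof is correct and follows essentially the same route as the paper: invoke Theorem~\ref{StabilityTheorem}, exploit the positivity of the kernel $k$ to compare $\hat k$ on vertical lines with its values on the real axis, and use the monotonicity of $\hat k$ on the reals together with $\hat k(0)=R_0$. The only cosmetic differences are that the paper phrases the stability case via a ``dominant real root'' argument (any complex root has real part below the unique real root, which is negative when $R_0<1$) whereas you pick an explicit $\varepsilon>0$ with $\hat k(-\varepsilon)<1$ to exhibit a root-free strip, and that you add a paragraph justifying the word \emph{exponential} via the spectral gap and the essential growth bound; the paper simply asserts exponential stability by implicit appeal to \cite{Diek1}.
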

\begin{proof}
Clearly the kernel $k \in L_{\rho}^{\infty}(0, \infty)$ corresponds to the Riesz representation of $D\mathcal{F}(0)$. Then, according to Theorem \ref{StabilityTheorem}, the stability of the steady state is determined by the sign of the real part of the
zeroes of the characteristic equation $\hat{k}(\lambda)=1,$ where
$\hat{k}$ stands for the Laplace transform of $k.$ 
$\hat{k}(\lambda)$ is defined (at least) for $Re(\lambda) >-\rho.$ Moreover, since the kernel $k$ is positive, $\hat{k}$ is for real $\lambda$ a decreasing function with limit 0 at infinity. Hence there is at most
one real solution $\hat{\lambda}$ of the characteristic equation,
which indeed exists and is positive if $\hat{k}(0) = R_0 > 1.$ So then the trivial equilibrium is unstable.\\
When $\hat{k}(0) = R_0 < 1,$ if there is a real root, it is
negative.
Moreover, if a non-real $\lambda$ is a root of the characteristic equation, then  $1 = \hat{k}(\lambda) =Re (\hat{k}(\lambda)) < \hat{k} (Re \lambda)$, which implies, by the fact that $\hat{k}$ tends to $0$, that there is a real root $\hat{\lambda}$ larger than $Re(\lambda)$. As such a real root is necessarily negative, the trivial equilibrium is locally exponentially stable.
\end{proof}

\begin{theorem} \label{global}
If $R_0<1$ and the hypotheses of Theorem \ref{R0} hold,  then all solutions of (\ref{linSolAt0}) tend exponentially to $0$ as $t\rightarrow \infty$.
\end{theorem}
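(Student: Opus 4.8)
The plan is to sandwich every nonnegative solution of the nonlinear renewal equation \eqref{scalar2} between $0$ and a solution of the linear equation \eqref{linSolAt0}, and then to show the latter decays exponentially because $R_0<1$. First, by the existence-and-uniqueness theorem a solution issued from a history $\phi\in\mathcal{X}^{+}$ stays nonnegative, so for $t\ge 0$ the argument of $g$ occurring in $\mathcal{F}(B_t)$ is nonnegative; under the monotonicity hypotheses of Theorem \ref{R0} (used only in the weak form ``$g$ non-increasing, $\beta$ non-decreasing'') this forces $g(\cdot)\le g(0)$, hence $x_m+\int_0^a g(\cdots)\,\ud\tau\le x_m+g(0)a$, and therefore
\[
B(t)=\mathcal{F}(B_t)\le \int_0^\infty \beta\big(x_m+g(0)a\big)\,e^{-\mu a}B(t-a)\,\ud a=\int_0^\infty k(a)\,B(t-a)\,\ud a .
\]
Thus $B$ is a nonnegative subsolution of \eqref{linSolAt0} with initial history $\phi$, and it suffices to control the right-hand side.

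The second step is a weighted-supremum (Gronwall-type) estimate. Since $\beta$ has a bounded derivative it grows at most linearly, so $k(a)=\beta(x_m+g(0)a)e^{-\mu a}\le C_\gamma e^{-\gamma a}$ for any fixed $\gamma\in(\rho,\mu)$; fix such a $\gamma$. Because $\widehat{k}(0)=R_0<1$ and $\eta\mapsto\widehat{k}(-\eta)$ is continuous at $0$, choose $\eta\in(0,\gamma)$ with $q:=\widehat{k}(-\eta)<1$. Writing $\int_0^\infty k(a)B(t-a)\,\ud a=\int_0^t+\int_t^\infty$, the ``memory'' part satisfies $e^{\eta t}\int_0^t k(a)B(t-a)\,\ud a=\int_0^t k(a)e^{\eta a}\big(e^{\eta(t-a)}B(t-a)\big)\,\ud a\le q\,M(t)$, where $M(t):=\sup_{0\le s\le t}e^{\eta s}B(s)$, while the ``history tail'' obeys $e^{\eta t}\int_t^\infty k(a)\phi(t-a)\,\ud a\le C_\gamma e^{(\eta-\gamma)t}\int_{-\infty}^0 e^{\gamma s}\phi(s)\,\ud s\le C_\gamma\|\phi\|_{\mathcal{X}}$ uniformly in $t\ge 0$ (here $\gamma>\rho$ is exactly what makes $\int_{-\infty}^0 e^{\gamma s}\phi(s)\,\ud s\le\|\phi\|_{\mathcal{X}}$, and $\rho<\mu/5$ guarantees such a $\gamma$ exists). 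Note $M(t)$ is finite and non-decreasing, because $t\mapsto B_t$ is continuous into $\mathcal{X}$ by the framework of Section 3 and $\mathcal{F}$ is continuous, so $B$ has a continuous representative on $[0,\infty)$. Combining the two pieces, $e^{\eta t}B(t)\le q\,M(t)+C_\gamma\|\phi\|_{\mathcal{X}}$ for all $t\ge 0$; taking the supremum over $t\in[0,T]$ and using monotonicity of $M$ gives $M(T)\le qM(T)+C_\gamma\|\phi\|_{\mathcal{X}}$, hence $M(T)\le\tfrac{C_\gamma}{1-q}\|\phi\|_{\mathcal{X}}$ for every $T$, i.e.\ $0\le B(t)\le\tfrac{C_\gamma}{1-q}\|\phi\|_{\mathcal{X}}\,e^{-\eta t}$, which is the assertion.

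I expect the main difficulty to be bookkeeping rather than anything deep: one must select $\gamma\in(\rho,\mu)$, the rate $\eta\in(0,\gamma)$, and $q=\widehat{k}(-\eta)<1$ in a mutually compatible way, and verify that the memory part, the history tail, and the initial value $B(0)=\mathcal{F}(\phi)$ are each controlled uniformly in $t$ and $T$; the one genuinely nontrivial input is the finiteness and monotonicity of $M(T)$, which rests on the continuous dependence $t\mapsto B_t\in\mathcal{X}$ from Section 3 together with the continuity of $\mathcal{F}$. An alternative to the Gronwall step is to note that $R_0=\|k\|_{L^1}<1$ already puts the resolvent kernel $r=\sum_{n\ge 1}k^{*n}$ of \eqref{linSolAt0} in $L^1$, to upgrade this to $e^{\delta\cdot}r\in L^1$ for small $\delta>0$ via the Paley--Wiener theorem using the spectral gap isolated in the proof of Theorem \ref{linear} (all roots of $1=\widehat{k}(\lambda)$ satisfy $\mathrm{Re}\,\lambda\le\sigma^{*}<0$ because $|\widehat{k}(\lambda)|\le\widehat{k}(\mathrm{Re}\,\lambda)$), and then to write the solution of \eqref{linSolAt0} as $y=h+r*h$ with $h$ the exponentially decaying forcing produced by $\phi$; the comparison of the first paragraph then transfers the decay to $B$.
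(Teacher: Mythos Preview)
Your proof is correct and shares with the paper the same first step: using the monotonicity of $g$ and $\beta$ to bound the nonlinear solution by the linear convolution inequality $B(t)\le\int_0^\infty k(a)\,B(t-a)\,\ud a$ with $k(a)=\beta(x_m+g(0)a)e^{-\mu a}$. From that point on the two arguments diverge. The paper splits the integral at $a=t$, writes the resulting inequality as $B\le f+k*B$ on $[0,\infty)$ with $f$ the (exponentially decaying) contribution of the initial history, invokes the nonnegative resolvent $r$ of $k$ (Theorem~2.3.4 in \cite{Gripen}) to deduce the generalised Gr\"{o}nwall bound $B\le y=f+r*f$, and finally appeals to Theorem~3.12 of \cite{Diek1} together with the spectral analysis of Theorem~\ref{linear} to conclude that $y$ decays exponentially. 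Your route is instead a direct weighted-supremum estimate: after choosing $\eta\in(0,\mu)$ with $q=\hat{k}(-\eta)<1$ you bound $e^{\eta t}B(t)$ by $q\,M(t)$ plus a uniformly bounded history tail and close the loop by taking the supremum. This is more elementary and fully self-contained (no resolvent theory, no external decay theorem), and it yields the explicit constant $C_\gamma/(1-q)$; the price is the additional bookkeeping you flag (choosing $\gamma,\eta$ compatibly and checking $M$ is finite). The alternative you sketch at the end, via the Paley--Wiener theorem and the resolvent, is essentially the paper's argument.
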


\begin{proof}
For a given solution let us write (cf. \eqref{heightFunction})
\begin{equation*}
S(a,t) = x_m + \int_0^a g \left( \int_{\tau}^{\infty} e^{-\mu \alpha}
B_t(\tau-a-\alpha) \,\mbox{d} \alpha \right) \, \mbox{d}\tau,
\end{equation*}
the size at time $t$ of an individual of age $a.$ From \eqref{scalar} we can write
\begin{equation}\label{gronwall}
\begin{aligned}
B(t) &= \int_0^{\infty} \beta(S(a,t)) e^{-\mu a} B(t-a) \,\mbox{d} a \\
& =
\int_{-\infty}^0 \beta(S(t-s,t)) e^{-\mu(t-s)} B(s) \,\mbox{d} s  +
\int_{0}^{t} \beta(S(t-s,t)) e^{-\mu (t-s)} B(s) \,\mbox{d} s \\ 
 &=: f(t) + \int_{0}^{t} \beta(S(t-s,t)) e^{-\mu (t-s)} B(s)\,\mbox{d} s \\
 & \leq
f(t) + \int_{0}^{t} \beta(x_m + g(0) (t-s)) e^{-\mu (t-s)} B(s)\,\mbox{d}
s.
\end{aligned}
\end{equation}
The kernel $k(a) = \beta(x_m + g(0) a) e^{-\mu a}$ of the linear Volterra integral equation
\begin{equation} \label{Volterra}
y(t) = f(t) + \int_0^t k(t-s) y(s) \,\ud s
\end{equation}
 has a nonnegative resolvent $r$ (meaning that $r(t) = k(t) + \displaystyle\int_0^t k(t-s)r(s) \,\ud s$ and $y(t) = f(t) + \displaystyle\int_0^t r(t-s)f(s) \,\ud s$) (see Theorem 2.3.4 in \cite{Gripen}). Then by a generalized Gr\"{o}nwall lemma, one obtains $B(t)\leq
y(t)$ where $y(t)$ is the solution of \eqref{Volterra}.

Indeed, using the usual notation for convolution, \eqref{gronwall} can be written as $B \leq f + k * B$ and so $B = f-g + k*B$ for $g = f + k*B -B  \geq 0.$ Then we have
\begin{equation*}
B = f - g + r*(f - g) = f + r*f - (g + r*g ) =y-(g+r*g)\;\Rightarrow\; B\leq y,
\end{equation*}
since $r$ and $g$ are non-negative (cf. Theorem 9.8.2 in \cite{Gripen}).
The claim follows since $y(t)$ tends exponentially to $0$ when $R_0 < 1$ by Theorem 3.12 in \cite{Diek1} and the final part of the proof of Theorem \ref{linear}.
\end{proof}

Let us recall the notation
\begin{equation} \label{size_equilibrium}
\bar{S}(a)= x_m +\int_0^{a} g \left( \bar{B} \frac {e^{-\mu \tau}}{\mu} \right) \,\mbox{d} \tau,
\end{equation}
for the size of an individual of age $a$ at the non-trivial equilibrium.

Let us now compute the linearisation of \eqref{scalar} around the
nontrivial equilibrium $\bar{B}$ using \eqref{size_equilibrium}. For this we set $B(t) = \bar{B} + y(t)$ and write (formally)
\begin{equation}
\begin{array}{ll}
\bar{B} + y(t)\\
= \displaystyle\int_0^{\infty} \left( \beta \big( \bar{S}(a)\big) + \beta' \big(
\bar{S}(a)\big)\int_0^a  g'\left(\bar{B} \frac{e^{-\mu\tau}}{\mu}\right) \right. \\
\left.  \quad\quad\quad \quad  \times\displaystyle\int_{\tau}^{\infty} e^{- \mu \alpha} y_t(-a+\tau-\alpha) \mbox{d} \alpha \mbox{d} \tau + o(y_t) \right) e^{-\mu a} \big(\bar{B} + y_t(-a)\big) \,\mbox{d} a,
\end{array}
\end{equation}
which, using the steady state condition \eqref{def_R} and neglecting higher order terms, leads to
\begin{equation} \label{nontrlinear}
\begin{aligned}
y(t)  & =  D \mathcal{F}(\bar{B}) y_t =  \int_0^{\infty} \beta \big( \bar{S}(a) \big)\, e^{-\mu a}y(t-a)\,
\mbox{d}a \\
  & +  \displaystyle\int_0^{\infty} \beta'\big( \bar{S}(a) \big)\, e^{-\mu a}
\bigg( \, \int_0^a \bar{B} g'\left(\bar{B} \frac{e^{-\mu
\tau}}{\mu}\right) \int_{\tau}^{\infty} e^{- \mu \alpha}
y(t-a+\tau-\alpha)\, \mbox{d} \alpha \mbox{d} \tau \bigg)\, \mbox{d} a\\
&  = \int_0^{\infty} \beta \big( \bar{S}(a) \big) e^{-\mu a}y(t-a)\,  \mbox{d}a \\
 & + \int_0^{\infty} \beta'\big(\bar{S}(a) \big) e^{-\mu a}  \left(\int_0^a \bar{B} g'\left(\bar{B} \frac{e^{-\mu \tau}}{\mu}\right) e^{- \mu (\tau-a)}
\int_{a}^{\infty} e^{- \mu \sigma} y(t-\sigma) \mbox{d}
\sigma \, \mbox{d} \tau \right) \mbox{d} a.
\end{aligned}
\end{equation}

\begin{remark}
See Appendix \ref{appendixB} for a rigorous derivation of (\ref{nontrlinear}). There, $\mathcal{F}$ is written essentially as the composition of simpler functions and then the chain rule is applied.  
\end{remark}

Changing the order of integration, the expression within parentheses inside the last integral in (\ref{nontrlinear})
can be rewritten as:
\begin{equation}
\begin{aligned}
& \int_0^a \bar{B} g'\left(\bar{B} \frac{e^{-\mu \tau}}{\mu}\right)
\int_{a}^{\infty} e^{- \mu (-a+\tau + \sigma)} y(t-\sigma) \,\mbox{d}
\sigma\,  \mbox{d} \tau \\
=  & \int_a^{\infty} \int_{0}^{a} \bar{B} g'\left(\bar{B} \frac{e^{-\mu
\tau}}{\mu}\right)  e^{- \mu \tau} \mbox{d} \tau e^{- \mu (\sigma -
a)} y(t-\sigma)\, \mbox{d}\sigma \\ 
=  & \int_a^{\infty} \left(
g\left(\frac{\bar{B}}{\mu}\right)- g\left(\bar{B} \frac{e^{-\mu
a}}{\mu}\right) \right) e^{- \mu (\sigma - a)} y(t-\sigma)\,
\mbox{d}\sigma.
\end{aligned}
\end{equation}
Thus, changing the integration order again, the second term on the
right hand side of \eqref{nontrlinear} reads
$$
\int_0^{\infty} \int_0^{\sigma} \beta'(\bar{S}(a)) \left[
g\left(\frac{\bar{B}}{\mu}\right)- g\left(\bar{B} \frac{e^{-\mu a}}{\mu}\right) \right]
 \mbox{d}a \, e^{-\mu \sigma} \, y(t-\sigma)\, \mbox{d} \sigma.
$$
Hence, \eqref{nontrlinear} is of the form $y(t)=\displaystyle\int_0^{\infty} k(a)
y(t-a) \,\mbox{d} a$ with the kernel
$$
k(a) = \beta \big( \bar{S}(a) \big)\, e^{-\mu a} +  e^{-\mu a}
\int_0^{a} \beta'(\bar{S}(\alpha)) \left[g\left(\frac{\bar{B}}{\mu}\right) -
g\left(\frac{\bar{B} e^{-\mu \alpha}}{\mu}\right) \right]\, \mbox{d} \alpha.
$$
Since
\begin{equation*}
\begin{aligned}
& \int_0^a \beta'(\bar{S}(\alpha)) g\left(\frac{\bar{B} e^{-\mu \alpha}}{\mu}\right)\mbox{d} \alpha = \int_0^a \beta'(\bar{S}(\alpha)) \bar{S}'(\alpha)\, \mbox{d} \alpha \\
 =  &\beta( \bar{S}(a))-\beta( \bar{S}(0))=\beta( \bar{S}(a))-\beta( x_m),
 \end{aligned},
\end{equation*}
the kernel $k$ simplifies to
\begin{equation*}
k(a)=\beta(x_m)e^{-\mu a}+g\left(\frac{\bar{B}}{\mu}\right)e^{-\mu a}\int_0^a \beta'(\bar{S}(\alpha))\, \mbox{d} \alpha,
\end{equation*}
which leads to the characteristic equation
\begin{equation}\label{DE-char-new_1}
1=\hat{k}(\lambda)=\frac{\beta(x_m)}{\lambda+\mu}+\frac{1}{\lambda+\mu} g\left(\frac{\bar{B}}{\mu}\right) \int_0^\infty \beta'(\bar{S}(a)) e^{-( \lambda+\mu)a}\,\ud a.
\end{equation}

Without the loss of generality, we will assume in the rest of this section that the minimum size is $x_m=0$.  This (technical) assumption makes the notation simpler.

\subsection{Sufficient conditions for stability of the non-trivial steady state}

 In this section we focus on the situation described in Theorem \ref{R0},   with both inequalities satisfied, so the situation in which a unique non-trivial steady state exists. Our aim is to formulate additional conditions on $g$ and $\beta$ that guarantee that all roots of the characteristic equation have negative real part, allowing us to conclude from Theorem \ref{StabilityTheorem}(a) that the non-trivial steady state is asymptotically stable. The strategy is to first consider a piecewise linear $\beta$ for which we can verify explicitly that all roots are in the (open) left half plane. Next we use Rouch\'{e}'s Theorem to show that for a large class of smooth perturbations of the piecewise linear $\beta$ the property 'no roots in the (closed) right half plane' is preserved.
   As already indicated, we fix $g$ but consider $\beta$ as a variable. Therefore the steady state value $\bar{B}$ (the solution of $R(B) = 1$) depends on $\beta$ and we shall use the notation $\bar{B}(\beta)$.  
To simplify the exposition we assume that 
\begin{equation}\label{g-assumption}
\lim_{z\to\infty}g(z)=0,
\end{equation}
which implies that the second inequality in Theorem \ref{R0} holds if $\beta(0)=0$ (recall that $x_m=0$).

Let us first assume  that the per capita fertility is
given by 
\begin{equation}
\beta_0(s):= \beta_{00} \max\{0,s-x_A\},\label{special-beta}
\end{equation}
where $x_A \geq 0$ is the adult size at which individuals start to reproduce.  Note that $\beta_0$ is not $C^1$ for $x_A>0$.  We believe that $\mathcal{F}$ defined by \eqref{scalar3}
 is nevertheless $C^1$, since $\beta_0$ is continuous.  We refrain from an attempt to prove this, as such involves, no doubt, many technicalities.   
 
 First we compute  (via integration by parts)
\begin{equation*}
\begin{aligned} 
R_0 = R(0) = & \int_0^\infty \beta_0\left( \int_0^a g(0) \mbox{d} \tau\right) e^{-\mu a}\mbox{d} a =  \int_0^\infty \beta_{00} \max\{ 0, g(0)a-x_A\}   e^{-\mu a}\mbox{d} a \\ 
= &\beta_{00} \int_{\frac{x_A}{g(0)}}^\infty  (g(0)a-x_A)   e^{-\mu a}\mbox{d} a  =  \frac{\beta_{00} g(0)}{\mu^2} e^{- \frac{\mu x_A}{g(0)} }.
\end{aligned}
\end{equation*}
Recall that a non-trivial steady state exists if and only if $R_0>1$.  So we assume in the rest of the subsection that $\beta_{00}$ and $x_A$ are such that this assumption holds.

Assume that the function $g$ is strictly decreasing.  Define $\bar{a}$ by
\begin{equation}\label{abar}
\int_0^{\bar{a}} g\left(\frac{\bar{B}(\beta_0) e^{- \mu \tau}}{\mu}\right) \mbox{d} \tau =x_A,
\end{equation}
i.e., $\bar{a}$ is the age at which individuals begin to reproduce given the environmental condition associated to the equilibrium. 

We have from \eqref{def_R} and with $x_m=0$ that
\begin{equation}\label{R_function}
\begin{aligned}
1=R(\bar{B}(\beta_0)) = & \int_0^{\infty} \beta_0 \left(\int_0^a g \left(\frac{\bar{B}(\beta_0)
	e^{-\mu \tau}}{\mu} \right) \, \mbox{d}\tau \right)\,\, e^{-\mu a} \,
\mbox{d}a \\
 =  & \beta_{00} \int_0^{\infty} \max\left\{0,\int_0^{a} g \left(\frac{\bar{B}(\beta_0) e^{-\mu
			\tau}}{\mu} \right) \mbox{d} \tau - x_A \right\}  e^{-\mu a} \mbox{d} a  \\
= & \beta_{00} \int_{\bar{a}}^{\infty} \left( \int_{0}^{a} g \left(\frac{\bar{B}(\beta_0) e^{-\mu
		\tau}}{\mu} \right) \mbox{d} \tau - \int_{0}^{\bar{a}} g \left(\frac{\bar{B}(\beta_0) e^{-\mu
		\tau}}{\mu} \right) \mbox{d} \tau \right) e^{-\mu a} \mbox{d} a
  \\
 = & \beta_{00} \int_{\bar{a}}^{\infty}\left( \int_{\bar{a}}^{a} g \left(\frac{\bar{B}(\beta_0) e^{-\mu
		\tau}}{\mu} \right) \mbox{d} \tau\right)  e^{-\mu a} \mbox{d} a \\
	 = &\beta_{00} \int_{\bar{a}}^{\infty} \left(\int_{\tau}^{\infty} e^{-\mu a} \mbox{d} a\right)
g \left(\frac{\bar{B}(\beta_0) e^{-\mu \tau}}{\mu} \right) \mbox{d} \tau \\
=& \beta_{00}
\int_{\bar{a}}^{\infty} \frac{e^{-\mu \tau}}{\mu}g \left(\frac{\bar{B}(\beta_0) e^{-\mu
		\tau}}{\mu} \right) \mbox{d} \tau = \frac{\beta_{00}}{\mu^2}\int_0^{e^{-\mu \bar{a}}} g\left(\frac{\bar{B}(\beta_0)}{\mu} \zeta\right) \ud\zeta.
\end{aligned}
\end{equation}

Next we note that  the characteristic equation \eqref{DE-char-new_1}
 reduces to
\begin{equation} \label{part-char}
	1 = \beta_{00} \frac{g\left(\frac{\bar{B}(\beta_0)}{\mu}\right)}{\lambda + \mu} \int_{\bar{a}}^{\infty} e^{-(\lambda + \mu) a} \mbox{d}a = \beta_{00} \frac{g\left(\frac{\bar{B}(\beta_0)}{\mu}\right)}{(\lambda+\mu)^2} e^{-(\lambda + \mu) \bar{a}}, 
\end{equation}
which, in the special case $x_A = 0$ (or, equivalently, $\bar{a}=0$) allows to identify the (two) roots as 
\begin{equation}\label{eigen}
	\lambda = - \mu \pm \sqrt{\beta_{00} g\left(\frac{\bar{B}(\beta_0)}{\mu}\right)} = \mu \left(-1 \pm \sqrt{R_0 \frac{g\left(\frac{\bar{B}(\beta_0)}{\mu}\right)}{g(0)}}\, \right).   
\end{equation}
So, under this assumption, we are able to explicitly formulate the
characteristic equation and even to explicitly compute its
roots.  From the condition for the existence of a nontrivial equilibrium \eqref{def_R} 
and \eqref{R_function} with $\bar{a} = 0,$ we have
\begin{equation}\label{g-estimate}
1 = R(\bar{B}(\beta_0)) = \frac{\beta_{00}}{\mu^2}\int_0^1g\left(\frac{\bar{B}(\beta_0)}{\mu} \zeta\right) \ud\zeta> \frac{\beta_{00}}{\mu^2} \min_{\zeta \in[0,1]} g\left(\frac{\bar{B}(\beta_0)}{\mu} \zeta\right) = \frac{\beta_{00}}{\mu^2} g\left(\frac{\bar{B}(\beta_0)}{\mu}\right),
\end{equation}
which implies that both eigenvalues in \eqref{eigen} are negative.  Hence,  Theorem \ref{StabilityTheorem}
 ensures that under these hypotheses (in particular that $x_A = 0$),  the nontrivial steady state is  locally asymptotically stable.  

Next we consider the case when $\bar{a}>0$,  i.e.  $x_A>0$.  Note that in this more general case,  similarly to \eqref{g-estimate},  we obtain the following estimate.
\begin{equation}\label{g-estimate2}
1 = R(\bar{B}(\beta_0)) = \frac{\beta_{00}}{\mu^2}\int_0^{e^{-\mu\bar{a}}} g\left(\frac{\bar{B}(\beta_0)}{\mu} \zeta\right) \ud\zeta> \frac{\beta_{00}}{\mu^2} e^{-\mu\bar{a}} g\left(\frac{\bar{B}(\beta_0)}{\mu}\right).
\end{equation}

Recall that in this case $\beta_0$ is not continuously differentiable.   However,  for any fixed value of $\beta_{00}$ and $x_A$   the roots of the characteristic equation have negative real parts,   see the proof of Theorem \ref{estimate-theorem}.  

 Hence the idea is to consider smooth fertility functions close enough (see Theorem \ref{estimate-theorem} 
below) to $\beta_0$,  and apply Theorem \ref{StabilityTheorem} to establish local asymptotic stability of the non-trivial steady state. 

Let us define
\begin{equation}
R(\beta,B)=\int_0^{\infty} \beta \left(\int_0^a g \left(\frac{B e^{- \mu \tau}}{\mu}\right)d\tau \right) e^{- \mu a} \ud a,
\end{equation}
and let us restrict to fertility functions such that $R(\beta,0) > 1.$  In this case there exists (a stationary birth rate)  $\bar{B}(\beta)>0$,   such that  $R(\beta,\bar{B}(\beta)) = 1$. 

\begin{lemma}\label{lemma} Let $b_1\geq \beta_{00}$. Let $g$ and $\beta$ satisfy the hypotheses of Theorems \ref{differentiability} and  \ref{R0},   let $g'(s)$ be strictly negative,  assume that $\displaystyle\lim_{z\to\infty}g(z)=0$ and $\beta(0)=0$, $\|\beta'\|_\infty \leq b_1$.  Then,  there exists a positive constant $C$,  only depending on $g$, $\mu$, $\beta_{00}$, $x_A$ and $b_1$, such that
  \begin{equation}
  \vert \bar{B}(\beta)-\bar{B}(\beta_0) \vert \leq C  \vert \vert \beta - \beta_0 \vert \vert_{\infty}.
  \end{equation}
\end{lemma}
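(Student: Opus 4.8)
The plan is to extract the estimate from the two equilibrium identities $R(\beta,\bar{B}(\beta))=1$ and $R(\beta_0,\bar{B}(\beta_0))=1$. Writing
\[
0 = R(\beta,\bar{B}(\beta)) - R(\beta_0,\bar{B}(\beta_0)) = \big[R(\beta,\bar{B}(\beta))-R(\beta_0,\bar{B}(\beta))\big] + \big[R(\beta_0,\bar{B}(\beta)) - R(\beta_0,\bar{B}(\beta_0))\big],
\]
the first bracket is small of order $\|\beta-\beta_0\|_\infty$ (it compares $\beta$ and $\beta_0$ at identical arguments), while for the second bracket I want a lower bound of the form $c\,|\bar{B}(\beta)-\bar{B}(\beta_0)|$. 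To get such a bound from the mean value theorem I need two ingredients: (i) an a priori bound confining $\bar{B}(\beta)$ to a fixed compact interval, and (ii) that $R(\beta_0,\cdot)$ is $C^1$ with derivative bounded away from $0$ on that interval.

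For (i) I would show that there is $B_+>0$ depending only on $g,\mu,b_1$ with $0<\bar{B}(\beta)\le B_+$ for every admissible $\beta$ (i.e. every $\beta$ meeting the hypotheses of the lemma, for which $\bar{B}(\beta)$ is defined). From $\beta(0)=0$ and $\|\beta'\|_\infty\le b_1$ we get $\beta(s)\le b_1 s$ for $s\ge 0$, so, using Tonelli to interchange the order of integration in $R(\beta,\bar{B}(\beta))$ and then substituting $\zeta=\bar{B}(\beta)e^{-\mu\tau}/\mu$,
\[
1 = R(\beta,\bar{B}(\beta)) \le \frac{b_1}{\mu\,\bar{B}(\beta)}\int_0^{\bar{B}(\beta)/\mu} g(\zeta)\,\ud\zeta = \frac{b_1}{\mu^2}\cdot\frac{1}{\bar{B}(\beta)/\mu}\int_0^{\bar{B}(\beta)/\mu} g(\zeta)\,\ud\zeta .
\]
Since $g$ is nonnegative, bounded and $g(\zeta)\to 0$ as $\zeta\to\infty$, the average on the right tends to $0$ as $\bar{B}(\beta)\to\infty$, so the inequality forces $\bar{B}(\beta)$ below a bound depending only on $g,\mu,b_1$; note that $\beta_0$ is itself admissible in this respect ($\|\beta_0'\|_\infty=\beta_{00}\le b_1$), so $\bar{B}(\beta_0)\le B_+$ as well.

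For (ii) I would use the reduced expression obtained exactly as in \eqref{R_function}, namely $R(\beta_0,B)=\beta_{00}\int_{a(B)}^{\infty}\big(\int_{a(B)}^{a} g(Be^{-\mu\tau}/\mu)\,\ud\tau\big)e^{-\mu a}\,\ud a$, where $a(B)$ solves $\int_0^{a(B)} g(Be^{-\mu\tau}/\mu)\,\ud\tau = x_A$; since $g$ is positive, the inner integral is strictly increasing in $a$ and (because $g(0)>0$) diverges as $a\to\infty$, so $a(B)$ is well defined and, by the implicit function theorem, a $C^1$ function of $B$. Differentiating under the integral sign (the boundary contribution at $a=a(B)$ vanishes because the integrand is $0$ there) gives
\[
\partial_B R(\beta_0,B) = \beta_{00}\int_{a(B)}^{\infty}\Big(\int_0^{a} g'\!\Big(\tfrac{Be^{-\mu\tau}}{\mu}\Big)\tfrac{e^{-\mu\tau}}{\mu}\,\ud\tau\Big)e^{-\mu a}\,\ud a ,
\]
which is continuous in $B$ on $[0,B_+]$ and, thanks to the strict hypothesis $g'<0$, strictly negative there. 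Hence $c:=-\max_{B\in[0,B_+]}\partial_B R(\beta_0,B)>0$, and $c$ depends only on $g,\mu,\beta_{00},x_A$ and $B_+$ (hence $b_1$).

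Finally, since $\bar{B}(\beta),\bar{B}(\beta_0)\in[0,B_+]$, the mean value theorem gives $|R(\beta_0,\bar{B}(\beta))-1|\ge c\,|\bar{B}(\beta)-\bar{B}(\beta_0)|$; on the other hand $1=R(\beta,\bar{B}(\beta))$, so the left-hand side equals
\[
\Big|\int_0^{\infty}\big(\beta_0-\beta\big)\Big(\int_0^{a} g\!\Big(\tfrac{\bar{B}(\beta)e^{-\mu\tau}}{\mu}\Big)\ud\tau\Big)e^{-\mu a}\,\ud a\Big| \le \frac{1}{\mu}\|\beta-\beta_0\|_\infty ,
\]
and combining the two displays yields the claim with $C=1/(c\mu)$. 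The main obstacle is step (ii): ruling out that $R(\beta_0,\cdot)$ becomes flat somewhere on $[0,B_+]$ — this is precisely where the sign condition $g'<0$ (rather than mere monotonicity of $g$) is used, and why the a priori bound of step (i) has to be produced first so that the argument lives on a compact set. A minor point to watch is that $\beta_0$ is only Lipschitz, not $C^1$; this is harmless because, once the $\max$ has been resolved, the non-smoothness is absorbed into the lower limit $a(B)$ and $R(\beta_0,\cdot)$ is genuinely $C^1$.
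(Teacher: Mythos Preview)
Your proof is correct and shares the paper's overall architecture: the same splitting
\[
0=[R(\beta,\bar B(\beta))-R(\beta_0,\bar B(\beta))]+[R(\beta_0,\bar B(\beta))-R(\beta_0,\bar B(\beta_0))],
\]
the same a priori bound $\bar B(\beta)\le B_+$ coming from $\beta(s)\le b_1 s$ (the paper writes $B_+=\tilde B$, defined as the root of $R(b_1\,\cdot\,,\tilde B)=1$, which is exactly your inequality read as an equation), and the same upper bound $\|\beta-\beta_0\|_\infty/\mu$ on the first bracket. The one substantive difference is how the lower bound on the second bracket is obtained. You compute $\partial_B R(\beta_0,B)$, observe it is continuous and strictly negative on the compact interval $[0,B_+]$ thanks to $g'<0$, and apply the mean value theorem. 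The paper instead estimates the difference $R(\beta_0,B_1)-R(\beta_0,B_2)$ directly: it introduces the auxiliary maturation age $\tilde a$ associated with $\tilde B$, notes that for $a\ge\tilde a$ both $\max$-terms in $\beta_0$ are resolved, uses that the sign of $g(B_1 e^{-\mu\tau}/\mu)-g(B_2 e^{-\mu\tau}/\mu)$ is independent of $\tau$, and then bounds $|g(\zeta_1)-g(\zeta_2)|\ge\min_{[0,\tilde B/\mu]}|g'|\cdot|\zeta_1-\zeta_2|$. The payoff of the paper's hands-on route is a fully explicit constant $C=2\mu^2 e^{\mu\tilde a}\big/\big(\beta_{00}\min_{[0,\tilde B/\mu]}|g'|\big)$, which is subsequently quoted in the Remark following Theorem~\ref{estimate-theorem}; your mean-value constant $1/(c\mu)$ encodes the same information but less explicitly.
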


\begin{proof}
We first obtain an upper bound on $\bar{B}(\beta).$ Let us define $\beta_1(s) = b_1 s$ and let $\tilde{B}$ be such that $R(\beta_1,\tilde{B}) = 1.$ Notice that the existence of $\tilde{B}$ is guaranteed because $R(\beta_1,0)>R(\beta_0,0) >1.$ We then have 
\begin{equation*}
R(\beta_1 ,\tilde{B}) = 1 = R(\beta,\bar{B}(\beta))  \leq R(\beta_1,\bar{B}(\beta)),
\end{equation*}
since $R$ is increasing with respect to its first argument (understanding that $\beta \leq \beta_1$ if $\beta(s) \leq \beta_1(s)$ for all $s\geq 0$).   Also note that $\bar{B}(\beta)) \leq \tilde{B}$
 because $R$ is decreasing with respect to its second argument.
 
Let us define $\tilde{a}$ such that $\displaystyle\int_0^{\tilde{a}} g\left(\frac{\tilde{B} e^{- \mu \tau}}{\mu}\right) \ud \tau = x_A .$ Since $g$ is decreasing, we have
\begin{equation}\label{atilde}
\int_0^{\tilde{a}} g\left(\frac{\bar{B}(\beta) e^{- \mu \tau}}{\mu}\right) \ud \tau \geq x_A. 
\end{equation}
Now  we can write
\begin{equation*}
\begin{aligned}
0&=R(\beta,\bar{B}(\beta)) -R(\beta_0,\bar{B}(\beta)) + R(\beta_0,\bar{B}(\beta)) -R(\beta_0,\bar{B}(\beta_0)) \\
& = R(\beta-\beta_0,\bar{B}(\beta)) +  R(\beta_0,\bar{B}(\beta)) -R(\beta_0,\bar{B}(\beta_0)).
\end{aligned}
\end{equation*}
Hence we have 
\begin{equation*}
\begin{aligned}
\frac{1}{\mu}  \vert \vert \beta - \beta_0 \vert \vert_{\infty} & \geq \vert R(\beta-\beta_0,\bar{B}(\beta)) \vert =  \vert R(\beta_0,\bar{B}(\beta)) - R(\beta_0,\bar{B}(\beta_0)) \vert \\
& = \beta_{00} \left\vert  \int_0^{\infty} \left( \max \left\{ 0, \int_0^a g\left(\frac{\bar{B}(\beta) e^{-\mu \tau}}{\mu}\right) \ud \tau - x_A  \right\} \right. \right. \\
& \qquad\qquad\qquad\quad \left. \left. -\max \left\{0 , \int_0^a g\left(\frac{\bar{B}(\beta_0) e^{-\mu \tau}}{\mu}\right) \ud \tau - x_A    \right\}\right) e^{- \mu a} \ud a   \right\vert  \\
& \geq \beta_{00} \int_{\tilde{a}}^{\infty} \int_0^a \left\vert  g\left(\frac{\bar{B}(\beta)e^{-\mu \tau}}{\mu}\right) - g\left(\frac{\bar{B}(\beta_0) e^{-\mu \tau}}{\mu}\right) \right\vert d\tau e^{-\mu a} \ud a,
\end{aligned}
\end{equation*}
where we used \eqref{atilde} and that the sign of $g\left(\frac{\bar{B}(\beta)e^{-\mu \tau}}{\mu}\right) - g\left(\frac{\bar{B}(\beta_0) e^{-\mu \tau}}{\mu}\right)$ only depends on the sign of $\bar{B}(\beta) - \bar{B}(\beta_0)$ due to the monotonicity of $g$.

So  we have (using $\frac{1}{\mu^3}e^{-\mu\tilde{a}}-\frac{1}{3\mu^3}e^{-2\mu\tilde{a}}\ge \frac{1}{2\mu^3}e^{-\mu\tilde{a}}$)
\begin{equation*}
\begin{aligned}
\frac{1}{\mu}  \vert \vert \beta - \beta_0 \vert \vert_{\infty} & \geq \beta_{00} \min_{s \in [0, \tilde{B}/\mu]} \vert g'(s) \vert  \int_{\tilde{a}}^{\infty} \left( \int_0^a \frac{e^{-\mu \tau}}{\mu} \ud\tau\right) e^{-\mu a} \ud a \, \vert \bar{B}(\beta) - \bar{B}(\beta_0) \vert \\
& \geq \beta_{00} \frac{e^{-\mu \tilde{a}}}{2 \mu^3} \min_{s \in [0, \tilde{B}/\mu]} \vert g'(s) \vert  \,  \, \vert \bar{B}(\beta) - \bar{B}(\beta_0) \vert,
\end{aligned}
\end{equation*}
and the claim follows immediately.
\end{proof}

We can now state
\begin{theorem}\label{estimate-theorem}
Let $b_1\geq \beta_{00}$. Let $g$ and $\beta$ satisfy the hypotheses of Theorems \ref{differentiability} 
and \ref{R0},   let $g'(s)$ be strictly negative,   assume that $\displaystyle\lim_{z\to\infty} g(z)=0$ and $\beta(0)=0$, $\|\beta'\|_\infty \leq b_1$.   Let $\bar{a}$ be defined implicitly through  \eqref{abar}. 
  Define
\begin{equation}\label{rMdefs}
r:=\frac{2 g(0) b_1}{\mu}\qquad\text{and}\qquad M:=\frac{\mu^2}{r^2}\left(1-\frac{\beta_{00}}{\mu^2} g\left(\frac{\bar{B}(\beta_0)}{\mu}\right)e^{-\bar{a}\mu}\right)>0,
\end{equation}
where positivity of $M$ follows from \eqref{g-estimate2}.   Moreover,   assume that
\begin{equation}\label{tildebetaCond}
\frac{g(0)}{\mu^2} \vert \vert \beta' - \beta_0' \vert \vert_{\infty} + \frac{\beta_{00}}{\mu^3}e^{-\mu \bar{a}} \vert \vert g' \vert \vert_{\infty} C \vert \vert \beta - \beta_0 \vert \vert_{\infty} < M,
\end{equation}
with $C$ given by Lemma \ref{lemma}.
Then, the unique positive stationary birth rate of the renewal equation \eqref{scalar2},   with fertility function $\beta$,   is locally asymptotically stable.
\end{theorem}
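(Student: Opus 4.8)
The plan is to apply Theorem~\ref{StabilityTheorem}(a): it is enough to show that every root of the characteristic equation \eqref{DE-char-new_1} attached to the non‑trivial steady state $\bar B(\beta)$ has negative real part. Since $x_m=0$ and $\beta(0)=0$, that equation is $h_\beta(\lambda)=0$, where $h_\beta(\lambda):=1-\hat k(\lambda)$ and $\hat k(\lambda):=\dfrac{g(\bar B(\beta)/\mu)}{\lambda+\mu}\displaystyle\int_0^\infty\beta'(\bar S(a))\,e^{-(\lambda+\mu)a}\,\ud a$, with $\bar S$ the equilibrium size function for $\beta$; $h_\beta$ is holomorphic on a neighbourhood of $\{\mathrm{Re}\,\lambda\ge0\}$. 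As a (purely auxiliary, explicitly given and holomorphic) comparison function we take $h_0(\lambda):=1-\hat k_0(\lambda)$ with $\hat k_0(\lambda):=\dfrac{\beta_{00}\,g(\bar B(\beta_0)/\mu)}{(\lambda+\mu)^2}\,e^{-(\lambda+\mu)\bar a}$, the left‑hand side of \eqref{part-char}, and we compare $h_\beta$ and $h_0$ by Rouch\'e's theorem.

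First, $h_0$ has no zero with $\mathrm{Re}\,\lambda\ge0$: setting $z=\lambda+\mu$, a zero satisfies $|z|^2=\beta_{00}g(\bar B(\beta_0)/\mu)e^{-\mathrm{Re}(z)\bar a}$, and for $\mathrm{Re}\,\lambda\ge0$ the left side is $\ge\mu^2$ whereas the right side is $\le\beta_{00}g(\bar B(\beta_0)/\mu)e^{-\mu\bar a}<\mu^2$ by \eqref{g-estimate2} — a contradiction. Next, in the closed right half‑plane $|\hat k(\lambda)|\le\dfrac{g(0)b_1}{\mu|\lambda+\mu|}=\dfrac{r/2}{|\lambda+\mu|}$ (using $g\le g(0)$, $\|\beta'\|_\infty\le b_1$ and the definition of $r$ in \eqref{rMdefs}), and the same bound holds for $\hat k_0$ since $\beta_{00}\le b_1$; hence every zero of $h_\beta$ or of $h_0$ with $\mathrm{Re}\,\lambda\ge0$ lies in $\{|\lambda+\mu|\le r/2\}$. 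If $r<\mu$ this set is disjoint from the closed right half‑plane and the theorem is vacuously true, so from now on $r\ge\mu$, and it suffices to control zeros in the half‑disc $D:=\{\mathrm{Re}\,\lambda\ge0\}\cap\{|\lambda+\mu|\le r\}$.

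On $\partial D$ I verify $|h_\beta-h_0|<|h_0|$. For the lower bound: on the arc $|\lambda+\mu|=r$, using $\beta_{00}g(\bar B(\beta_0)/\mu)e^{-\mu\bar a}=\mu^2-r^2M$ from \eqref{rMdefs}, one gets $|h_0(\lambda)|\ge1-\dfrac{\mu^2-r^2M}{r^2}\ge M$ (as $r\ge\mu$), while on the segment $\mathrm{Re}\,\lambda=0$ one has $|h_0(i\omega)|\ge1-\dfrac{\beta_{00}g(\bar B(\beta_0)/\mu)e^{-\mu\bar a}}{\mu^2+\omega^2}\ge h_0(0)=\dfrac{r^2M}{\mu^2}\ge M$; thus $|h_0|\ge M$ on all of $\partial D$. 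For the difference, $|h_\beta-h_0|=|\hat k-\hat k_0|$, and one passes from $\hat k$ to $\hat k_0$ in two moves: (i) replacing $\beta'$ by $\beta_0'$ in the integrand costs at most $\dfrac{g(0)}{\mu^2}\|\beta'-\beta_0'\|_\infty$ (bound the difference of integrands by $\|\beta'-\beta_0'\|_\infty$, integrate $e^{-\mu a}$, and use $|\lambda+\mu|\ge\mu$); (ii) replacing $g(\bar B(\beta)/\mu)$ by $g(\bar B(\beta_0)/\mu)$ and the equilibrium reproduction‑onset age $a_\ast$, defined by $\bar S(a_\ast)=x_A$, by $\bar a$ costs at most $\dfrac{\beta_{00}}{\mu^3}e^{-\mu\bar a}\|g'\|_\infty\,C\,\|\beta-\beta_0\|_\infty$, by the mean value theorem for $g$, the elementary bound $|a_\ast-\bar a|\le\mathrm{const}\cdot|\bar B(\beta)-\bar B(\beta_0)|$ obtained by comparing $\bar S(a_\ast)=x_A$ with \eqref{abar} (using the strict monotonicity of $g$), and Lemma~\ref{lemma}. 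Adding the two costs, $|h_\beta(\lambda)-h_0(\lambda)|$ does not exceed the left‑hand side of \eqref{tildebetaCond}, which by hypothesis is $<M\le|h_0(\lambda)|$ on $\partial D$. Rouch\'e's theorem then gives that $h_\beta$ and $h_0$ have the same number of zeros inside $D$, namely none; moreover $|h_\beta|\ge|h_0|-|\hat k-\hat k_0|>0$ on $\partial D$, and by the localisation above every zero of $h_\beta$ with $\mathrm{Re}\,\lambda\ge0$ lies in the closed half‑disc $D$, hence—being neither on $\partial D$ nor in the interior of $D$—there is none. By Theorem~\ref{StabilityTheorem}(a) the non‑trivial steady state is locally asymptotically stable.

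The heart of the argument, and the only non‑routine step, is move (ii) in the difference estimate: since $\beta_0$ is merely piecewise linear, $\beta_0'$ is discontinuous, so $\beta_0'(\bar S(\cdot))-\beta_0'(\bar S_{\beta_0}(\cdot))$ cannot be handled by a Lipschitz bound. Instead one must recognise that the entire effect of this change on $\hat k$ is carried by the displacement $a_\ast-\bar a$ of the reproduction‑onset age, show that this displacement is $O(|\bar B(\beta)-\bar B(\beta_0)|)$, and then invoke Lemma~\ref{lemma}, while tracking the constants carefully enough that the total difference stays below the explicit threshold $M$ of \eqref{rMdefs}–\eqref{tildebetaCond}. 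The remaining ingredients—absence of admissible zeros of $h_0$, the localisation to $D$, the lower bound on $|h_0|$ on $\partial D$, and move (i)—are all elementary.
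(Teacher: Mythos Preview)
Your overall strategy coincides with the paper's: Rouch\'e's theorem on the closed half-disc $U=D=\{|\lambda+\mu|\le r,\ \mathrm{Re}\,\lambda\ge0\}$, comparison with the explicit $h_0$ from \eqref{part-char}, the localisation of possible roots, and the uniform lower bound $|h_0|\ge M$ (which the paper obtains on all of $U$ by essentially the same triangle-inequality computation you give on $\partial D$).

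The gap is in move (ii). After move (i) you are at $\hat k_1(\lambda)=\dfrac{\beta_{00}\,g(\bar B(\beta)/\mu)}{(\lambda+\mu)^2}e^{-(\lambda+\mu)a_*}$ and claim that passing to $\hat k_0$ costs at most the second summand of \eqref{tildebetaCond}. But decomposing
\[
\hat k_1-\hat k_0=\frac{\beta_{00}}{(\lambda+\mu)^2}\Big[\big(g(\tfrac{\bar B(\beta)}{\mu})-g(\tfrac{\bar B(\beta_0)}{\mu})\big)e^{-(\lambda+\mu)a_*}+g(\tfrac{\bar B(\beta_0)}{\mu})\big(e^{-(\lambda+\mu)a_*}-e^{-(\lambda+\mu)\bar a}\big)\Big]
\]
shows that, besides the $g$-change (which already carries $e^{-\mu a_*}$ rather than $e^{-\mu\bar a}$), there is a further contribution of order $\dfrac{\beta_{00}\,g(\bar B(\beta_0)/\mu)}{\mu^2}\,|\lambda+\mu|\,e^{-\mu\min(\bar a,a_*)}|a_*-\bar a|$, involving both the unspecified constant in your ``$|a_*-\bar a|\le\mathrm{const}\cdot|\bar B(\beta)-\bar B(\beta_0)|$'' and an extra factor $|\lambda+\mu|\le r$; neither is absorbed by the second summand of \eqref{tildebetaCond}. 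Hence the Rouch\'e inequality $|h_\beta-h_0|<M$ is not established under hypothesis \eqref{tildebetaCond} as stated. The paper sidesteps $a_*$ altogether: it adds and subtracts $\dfrac{g(\bar B(\beta)/\mu)}{\lambda+\mu}\displaystyle\int_0^\infty\beta_{00}\mathcal H(a-\bar a)e^{-(\lambda+\mu)a}\,\ud a$ (with $\bar a$, not $a_*$), so that the $g$-change produces exactly the second summand of \eqref{tildebetaCond}, and bounds the remaining piece pointwise via $|\beta'(\bar S(a;\beta))-\beta_{00}\mathcal H(a-\bar a)|\le\|\beta'-\beta_0'\|_\infty$. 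You are right to sense that this last pointwise estimate is precisely where the $a_*$-versus-$\bar a$ discrepancy hides; but your attempt to make it explicit does not land on the constants of \eqref{tildebetaCond}, so as written the argument does not close---you would need either to follow the paper's decomposition or to carry the extra $a_*$-term explicitly and verify it against $M$, which in general calls for a hypothesis stronger than \eqref{tildebetaCond}.
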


\begin{proof}
We denote the right hand side of \eqref{DE-char-new_1}  minus $1$ by $f(\lambda,\beta)$.   Note that $f(\cdot,\beta)$ is a holomorphic function of $\lambda$ for $Re(\lambda)>-\mu$,  which depends on the functional parameter $\beta$.  Note that $f(\lambda,\beta_0)$ equals the right hand side of  \eqref{part-char} minus $1$.

From \eqref{DE-char-new_1}  we have,   for any zero $\lambda$ of $f(\lambda,\beta)$ and of $f(\lambda,\beta_0)$,  such that $Re(\lambda)>-\mu/2$ the following uniform bound
\begin{equation*}
\vert \lambda+\mu\vert \leq g(0)b_1\int_0^\infty e^{-(Re(\lambda)+\mu)a} \ud a=\frac{g(0)b_1}{Re(\lambda)+\mu}<\frac{2 g(0) b_1}{\mu}=r.
\end{equation*}

Therefore,   all possible zeroes of $f(\lambda,\beta)$ and of $f(\lambda,\beta_0)$,  with non-negative real part,   belong to the (compact) disk segment 
\begin{equation*}
U:=\{\lambda\in\mathbb{C} \;:\; \vert \lambda+\mu\vert \leq r,\; Re(\lambda)\geq 0\}.
\end{equation*}
On the other hand,  $\vert f(\lambda,\beta_0)\vert >M$ on $U$ where $M$ is defined in \eqref{rMdefs} (so in particular  $f(\lambda,\beta_0)$ has no zero with non-negative real part).  Indeed,  using the triangle inequality,   we have
\begin{equation*}
\begin{aligned}
\vert f(\lambda,\beta_0)\vert &=\left\vert 1-\beta_{00} \frac{g\left(\frac{\bar{B}(\beta_0)}{\mu}\right)}{(\lambda+\mu)^2} e^{-(\lambda + \mu) \bar{a}}\right\vert =\frac{1}{\vert \lambda+\mu\vert^2}\left\vert (\lambda+\mu)^2-\beta_{00} g\left(\frac{\bar{B}(\beta_0)}{\mu}\right) e^{-\bar{a}(\lambda+\mu)}\right\vert  \\
& \ge  \frac{1}{\vert \lambda+\mu\vert^2} \left\vert Im(\lambda)^2+ \left(Re(\lambda)+\mu \right)^2-\beta_{00} g\left(\frac{\bar{B}(\beta_0)}{\mu}\right) e^{-\bar{a}(Re(\lambda)+\mu)}\right\vert. 
\end{aligned}
\end{equation*}
Noting that for $Re(\lambda)\ge 0$,  from \eqref{g-estimate2}  we have 
\begin{equation*}
 \left(Re(\lambda)+\mu\right)^2-\beta_{00} g\left(\frac{\bar{B}(\beta_0)}{\mu}\right) e^{-\bar{a}(Re(\lambda)+\mu)}\ge 0,
\end{equation*}
we conclude that there is no need for the absolute value bars around the second factor at the right hand side, and we obtain that
\begin{equation*}
\begin{aligned}
\vert f(\lambda,\beta_0)\vert   \geq \frac{\mu^2}{r^2}\left(1-\frac{\beta_{00}}{\mu^2} g\left(\frac{\bar{B}(\beta_0)}{\mu}\right) e^{-\bar{a}\mu}\right)=M.
\end{aligned}
\end{equation*}
Then,  Rouch\'{e}'s theorem ensures that any holomorphic function $h$ such that $\vert h(\lambda)-f(\lambda,\beta_0)\vert <M$ on the boundary of $U$,   does not vanish on $U$,  because $f(\lambda,\beta_0)$ does not vanish on $U$.   As a consequence, all zeros of $f(\lambda,\beta)$ will have negative real part whenever $ \vert f(\lambda,\beta)-f(\lambda,\beta_0)\vert <M$ on the boundary of $U$, which implies the stability of the non-trivial steady state corresponding to $\beta$ (by Theorem  \ref{StabilityTheorem}). 
  To this end note that one can show that $\vert f(\lambda,\beta)-f(\lambda,\beta_0)\vert $ is bounded above by the left hand side of \eqref{tildebetaCond}. Indeed,   using Lemma \ref{lemma},  and denoting by $\mathcal{H}$ the Heaviside step function,   we have for $Re(\lambda)\ge 0$ 
\begin{equation*}
\begin{aligned}
\vert f(&\lambda,\beta) - f(\lambda,\beta_0) \vert \\
= & \bigg\vert \frac{1}{\lambda + \mu} \int_0^{\infty} g\left(\frac{\bar{B}(\beta)}{\mu}\right) \left( \beta'(\bar{S}(a;\beta)) - \beta_{00} \mathcal{H}(a-\bar{a})  \right)e^{-(\lambda + \mu)a} \ud a \\
& + \frac{1}{\lambda + \mu} \int_0^{\infty} \left( g\left(\frac{\bar{B}(\beta)}{\mu}\right) -g\left(\frac{\bar{B}(\beta_0)}{\mu}\right)  \right) \beta_{00} \mathcal{H}(a-\bar{a}) e^{-(\lambda + \mu)a} \ud a \bigg\vert \\
\leq & \frac{g(0)}{\mu}\int_0^\infty  \vert \beta'(\bar{S}(a;\beta)) -\beta_{00} \mathcal{H}(a-\bar{a}) \vert e^{-\mu a} \ud a + \frac{\beta_{00}}{\mu^2} e^{-\mu\bar{a}} \left\vert g\left(\frac{\bar{B}(\beta)}{\mu}\right)-g\left(\frac{\bar{B}(\beta_0)}{\mu}\right)\right\vert& \\
\leq & \frac{g(0)}{\mu^2} \vert \vert \beta' - \beta_0' \vert \vert + \frac{\beta_{00}}{\mu^3} e^{- \mu \bar{a}} \vert \vert g' \vert \vert_{\infty}  \vert \bar{B}(\beta) - \bar{B}(\beta_0)\vert& \\
 \leq & \frac{g(0)}{\mu^2} \vert \vert \beta' - \beta_0' \vert \vert + \frac{\beta_{00}}{\mu^3} e^{- \mu \bar{a}} \vert \vert g' \vert \vert_{\infty} C  \vert \vert \beta - \beta_0 \vert \vert_{\infty}.
\end{aligned}
\end{equation*}
Therefore,  if $\beta$ satisfies \eqref{tildebetaCond}, the positive stationary birth rate is locally asymptotically stable.
\end{proof}

\begin{remark} Assumption  \eqref{tildebetaCond}
 can be replaced by a more explicit one using the proof of Lemma \ref{lemma} (i.e.  using the definitions of $\tilde{B}$ and $\tilde{a}$):
\begin{equation*}
\frac{g(0)}{\mu^2} \vert \vert \beta' - \beta_0' \vert \vert_{\infty} + \frac{2}{\mu}e^{\mu(\tilde{a}- \bar{a})} \frac{\vert \vert g' \vert \vert_{\infty}}{\displaystyle\min_{s\in [0, \tilde{B}/\mu]}{\vert g'(s) \vert}} \vert \vert \beta - \beta_0 \vert \vert_{\infty} < M.
\end{equation*}
\end{remark}

\subsection{Semi-explicit expression for a particular case}\label{part-cases}

In this section we assume that the per capita fertility is proportional to the size,   that is  $\beta(s) = \beta_0 s$ (i.e., $x_A =0$). In addition we consider that the individual growth rate is of the form $g(z) = \frac{g_0}{1+z/z_0}$ where $g_0 > 0$ and $z_0>0$ (recall that $z$ represents the environment that an individual experiences, which is given by the number of individuals that are larger than it). 

In this situation, \eqref{R_function} gives
$$
R(B) = \frac{\beta_0}{\mu^2} \int_0^{1} \frac{g_0}{1+\frac{B}{\mu} \frac{\zeta}{z_0}} \mbox{d} \zeta =  \frac{\beta_0 g_0}{\mu^2} \frac{\ln\left(1+B/(\mu z_0)\right)}{B/(\mu z_0)}= R_0 \frac{\ln\left(1+B/(\mu z_0) \right)}{B/(\mu z_0)}.
$$
Therefore, the birth rate at the nontrivial equilibrium (which necessarily exists if $R_0 > 1$ as discussed in Section \ref{section_EofSS}) is the unique positive solution $\bar{B}$ of the equation $\frac{\ln \left(1+B/(\mu z_0) \right)}{B/(\mu z_0)} = \frac{1}{R_0}.$ This allows an explicit expression for $\bar{B}$ in terms of the Lambert function $W_{-1}$ as
\begin{equation}\label{rationalbirthrate}
\bar{B} = \mu z_0 \left( -R_0 W_{-1}\bigg(-\frac{\exp(-1/R_0)}{R_0} \bigg) - 1 \right).
\end{equation}
Indeed, take  $z = - (1 + B /(\mu z_0))/R_0 < - 1/R_0$  in the preceding equation, which gives  $z e^{z} = - (1/R_0) e^{-1/R_0}$. Then the (only) solution to this equation is the Lambert function  $W_{-1}$ ( i.e., the inverse function of the (monotonously decreasing) function  $f(z) = z e^{z}$  restricted to the interval $(-\infty, -1)$)  evaluated at  $-(1/R_0) e^{-1/R_0}$.

More interestingly, an explicit expression can also be obtained for the density with respect to size in the steady state. Indeed, \eqref{size_equilibrium} gives in this case,
\begin{equation}\label{rational_size}
\bar{S} (a) = \int_0^a \frac{g_0}{1+\frac{\bar{B} e^{- \mu \tau}}{\mu z_0}} \mbox{d} \tau =   \frac{g_0}{\mu}  \ln \left(\frac{\mu z_0 e^{\mu a}  + \bar{B} }{\mu z_0 + \bar{B}} \right)   ,
\end{equation}
which leads to 
$$
\bar{S}^{-1} (x) = \frac{1}{\mu} \ln \left( \frac{(\mu z_0 + \bar{B}) e^{\frac{\mu}{g_0}
x} -\bar{B}}{\mu z_0} \right)
$$
and to
$$
\bar{S}'\big(\bar{S}^{-1}(x) \big) = g_0 \frac{(\mu z_0 + \bar{B}) \exp{\left(\frac{\mu}{g_0}x\right)} - \bar{B}}{(\mu z_0 + \bar{B}) \exp{\left(\frac{\mu}{g_0}x\right)}}.
$$
By \eqref{size_density} we finally obtain
\begin{equation}\label{rationaldensity}
\bar{u}(x) = \frac{\bar{B} \exp(-\mu \bar{S}^{-1} (x))}{\bar{S}'\big(\bar{S}^{-1}(x) \big)} =  \frac{\mu z_0 \bar{B} (\mu z_0 + \bar{B}) \exp\left(\frac{\mu}{g_0} x\right)}{g_0 \Big(\left(\mu z_0 + \bar{B}\right) \exp\left(\frac{\mu}{g_0} x\right) - \bar{B} \Big)^2}.
\end{equation}
Moreover, an easy integration gives the following expression for the population number above an individual of size $x$
\begin{equation}\label{popnumber}
\int_{x}^{\infty} \bar{u}(s) \mbox{d} s = \frac{\bar{B} z_0}{(\mu z_0+\bar{B})e^{\frac{\mu}{g_0}x}-\bar{B}}.
\end{equation}

\section{Concluding remarks}

The principle of linearised stability (PLS for short), widely used in the theory of ODEs, says that the stability of a stationary state is determined by the stability properties of the linearised system.  This principle has also been proved to hold in dynamical systems of infinite dimension with a ``semilinear'' structure (namely semilinear PDEs and DE, see \cite{henry1981,pazy1983,diekmann1995}) via the variation of constants formula. In this article we used the PLS to analyse rigorously the local stability of stationary birth rates of (\ref{scalar2}). As a consequence of such an analysis we found that for reasonable and rather general biological functional responses (see the hypotheses of Theorem \ref{estimate-theorem}), the non-trivial stationary birth rate of (\ref{scalar2}) is locally asymptotically stable.

The PLS,  as stated above,  cannot be applied to the PDE formulation presented in Appendix \ref{appendixPDE}. The reason is that, as explained in detail in \cite{BCDF2022}, the nonlinear semigroup associated to (\ref{pde}) is not differentiable, and hence it cannot be linearised. This does not mean, however, that, if the PDE system (\ref{pde}) is linearised ``formally'' around a stationary distribution $\bar{u}$, the stability of $\bar{u}$ cannot be determined from the stability of the linearised system. In fact we expect that such is possible, but a proof,  as far as we know,  is still missing.

As explained in \cite{BCDF2022}, a way to prove this result would be to establish an ``equivalence'' between orbits of the delay formulation (in the state space of weighted birth rate histories, i.e. $\mathcal{X}$) and orbits of the PDE formulation (in the state space of integrable functions of height, i.e. $L^1(x_m,\infty)$). By an ``equivalence'' we specifically mean to find a continuous function $\mathcal{L}_\text{DE}^\text{PDE}:\mathcal{X} \rightarrow L^1(x_m,\infty)$ mapping orbits in $\mathcal{X}$ to orbits in $L^1(x_m,\infty)$ and vice-versa (i.e. an analogous continuous function $\mathcal{L}_\text{PDE}^\text{DE}:L^1(x_m,\infty)\rightarrow \mathcal{X}$), so that stability results can be translated from one formulation to the other. In \cite{BCDF2022} we found that for these functions to exist, one needed to work in a (exponentially) weighted space of integrable functions of height, $L^1_w(x_m,\infty)$, where the proper value of $w$ depended on the weight $\rho$ chosen for $\mathcal{X}$ (working with the unweighted space $L^1(x_m,\infty)$ was possible if $\rho$ was chosen to be equal to the mortality rate $\mu$, since that implied $w=0$). In fact, in that paper the phase spaces for both the PDE and the DE included a component with information on the environmental condition. These additional components allowed to establish a surjective function (with the desired properties mentioned above) mapping states from the delay formulation to states of the PDE formulation (and vice-versa by taking a pseudoinverse of that function). As we are about to see, the analogous function associated to the (simpler) phase spaces used in this paper fails to be surjective (precluding any attempt of extending the results of \cite{BCDF2022} to the present work).

Natural candidates for $\mathcal{L}_\text{DE}^\text{PDE}$ and $\mathcal{L}_\text{PDE}^\text{DE}$ may be obtained using the biological interpretation of the functions involved in \eqref{scalar2} and \eqref{pde}. Indeed, take $\phi\in\mathcal{X}$ a birth rate history and $u_0\in L^1(x_m,\infty)$ a `corresponding' population height-distribution and define
\begin{equation}\label{defX}
\begin{aligned}
X(\tau;\phi):=S(-\tau,0;\phi) & =  x_m+\int_{0}^{-\tau} g\left(\int_\sigma^\infty \phi(\tau+\sigma-\alpha)e^{-\mu\alpha}\ud\alpha\right)\ud\sigma\\
& =  x_m+\int_{0}^{-\tau} g\left(\int_{-\infty}^\tau \phi(\theta)e^{-\mu(\tau+\sigma-\theta)}\ud\theta\right)\ud\sigma 
\end{aligned}
\end{equation}
for $\tau\in(-\infty,0]$ (i.e. the size at time 0 of an individual born at $\tau$ given the birth rate history $\phi$, see (\ref{heightFunction})) and $T(x;\phi)$, for $x\in[x_m,\infty)$, as the inverse of $X(\cdot;\phi)$ (which exists if $g$ is bounded and decreasing and gives the time at birth of an individual with size $x$ at time 0 given the birth rate history $\phi$). Then we have
\begin{equation}\label{conservationEquivalence}
\int_{x_m}^{x} u_0(x)\ud x = \int_{T(x;\phi)}^0 \phi(\theta)e^{\mu\theta}\ud\theta
\end{equation}
because being younger means being smaller, and hence the individuals smaller than $x$ must coincide with those born after $T(x;\phi)$ that have survived. Then, differentiation with respect to $x$ gives
\begin{equation}\label{L_DEtoPDE}
u_0(x) = -\phi(T(x;\phi))e^{\mu T(x;\phi)}T'(x;\phi), 
\end{equation}
which gives a natural candidate for $\mathcal{L}_\text{DE}^\text{PDE}$. Similarly, by rewriting (\ref{conservationEquivalence}) as
\[
\int_{x_m}^{X(\tau;\phi)} u_0(x)\ud x = \int_\tau^0 \phi(\theta)e^{\mu\theta}\ud\theta,
\]   
differentiation with respect to $\tau$ gives
\begin{equation}\label{L_PDE_to_DE}
u_0(X(\tau;\phi))X'(\tau;\phi) = -\phi(\tau)e^{\mu\tau}. 
\end{equation}
Unlike (\ref{L_DEtoPDE}), the above equation is problematic in that it does not give an explicit formula for $\phi$ in terms of $u_0$. It turns out that the above equation does not define implicitly $\phi\in\mathcal{X}$ for each $u_0\in L^1(x_m,\infty)$ (which is equivalent to say that $\mathcal{L}_\text{DE}^\text{PDE}$ defined through (\ref{L_DEtoPDE}) is not surjective). To see this choose, as a counterexample, $\mu=0$, $g(E)=1-E$ for $E<1/2$ (it doesn't matter what $g$ does for $E\geq 1/2$, besides being decreasing) and $u_0(x)=1$ for $x\in(x_m,x_m+1)$ and 0 otherwise. Then formula (\ref{defX}) simplifies to $X(\tau;\phi) = x_m - \tau g\left(\displaystyle\int_{-\infty}^{\tau}\phi(\theta)\ud\theta\right)$ and equation (\ref{L_PDE_to_DE}) implies
\[
\phi(\tau)=\frac{g\left(\displaystyle\int_{-\infty}^{\tau}\phi(\theta)\ud\theta\right)}{1-\tau g'\left(\displaystyle\int_{-\infty}^{\tau}\phi(\theta)\ud\theta\right)},
\]
if $X(\tau;\phi)<1$ and $\phi(\tau)=0$ otherwise. This forces the support of $\phi$ to be $(-1,0)$, so that $X(\tau;\phi) < x_m+1$ for $\tau\in(-1,0)$, and thus $\phi$ solves (\ref{L_PDE_to_DE}) only if it satisfies
\[
\phi(\tau)  = \frac{1-\displaystyle\int_{-1}^{\tau}\phi(\theta)\ud\theta}{1+\tau} 
\]
as long as $\displaystyle\int_{-1}^{\tau}\phi(\theta)\ud\theta <1/2$. Since the right hand side of this equation has a non-integrable singularity for $\tau\downarrow -1$, this relation contradicts that $\phi\in\mathcal{X}$. The fact that equation (\ref{L_PDE_to_DE}) fails to define a birth rate history in $\mathcal{X}$ as a function of $u_0$ means that there are reasonable population densities with respect to size (such as the indicator function used in the example) that cannot be obtained by prescribing an integrable birth rate history. 

As already mentioned, this situation deviates from what we had in \cite{BCDF2022}, where an explicit formula for $\mathcal{L}_\text{PDE}^\text{DE}$ was derived thanks to the additional environmental variable that was considered as part of the phase space (and somehow provided more room to play with). Since the scalar renewal equation presented in Section \ref{sectionDelayForm} was obtained precisely by expressing the environmental variable in terms of the birth rate history (and thus restricting the set of admissible environmental histories), a way to overcome this difficulty would be to work with an extended version of the delay formulation in which the environmental history is a proper element of the phase space (and thus there is also a delay equation for it). In addition, such an extended version would allow us to analyse more general environmental feedbacks. For instance environmental feedbacks of the form $E(x,t)=\int_x^\infty \alpha(y) u(y,t)dy$ (compare with (\ref{environmentalVariable})), where the impact of larger individuals depends on their size. Such situations cannot be formulated in terms of only a renewal equation for the birth rate. Indeed, since the environmental history is needed to give the size individuals will have in the future, the environmental condition felt by an individual is no longer determined only by the individuals born before him but it depends also on the environmental history itself. The drawback of an extended formulation is that then the environmental history $t \mapsto E(\cdot,t)$ takes values in an infinite dimensional space, which makes the analysis of the differentiability (analogue of Theorem \ref{differentiability}) much more involved (the theory to deal with these cases is developed in \cite{Diek2}). 

What could be the implications of such a non-equivalence between the two formulations, and specifically of the fact that there are population densities that cannot be obtained naturally from a birth rate history? It seems that the non-equivalence does not imply differences in the number of stationary states and attractors in general found in each formulation. In fact we expect a one-to-one correspondence between orbits in the $\omega$-limit sets of the two formulations (such a correspondence would be a consequence of the relation
between solutions of the RE and solutions of the PDE given in subsection \ref{secB1} of Appendix \ref{appendixPDE}). What might be affected by the non-equivalence is the stability behaviour of the corresponding $\omega$-limit sets. A priori (with what we have shown in this paper) we cannot rule out the possibility that a stationary population density of the PDE formulation is unstable, while the corresponding stationary birth rate of the delay formulation is stable. The reason is that there are states arbitrarily close to such a stationary population density that cannot be related to any birth history from a neighbourhood of the stationary birth rate. Further work is needed to rule out this kind of discrepancy between the two formulations (or, alternatively, to give a specific example where the discrepancy takes place, although we doubt that such an example exists).

\bmhead{Acknowledgments}

This work was partially supported by the research projects MT2017-84214C2-2-P and PID2021-123733NB-I00. We also thank the International Centre for Mathematical Sciences for financial support we received from the Research in Groups program during our stay at Edinburgh in July 2017. The first ideas for the present manuscript arose there and then.  We thank the reviewers for their careful reading of the manuscript and helpful comments.

\begin{appendices}

\section{Differentiability}\label{appendixB}
\begin{theorem}
\textbf{(Theorem \ref{differentiability})} Assume that $g$ and $\beta$ have a bounded and globally Lipschitzian first derivative (with common constant $2C$). Also assume that $g$ is bounded, positive and bounded away from $0$.
	Then the map $\mathcal{F}:\mathcal{X}\rightarrow\mathbb{R}$ defined in (\ref{scalar3}) is continuously differentiable with bounded derivative provided that the parameter $\rho$ in the definition of $\mathcal{X}$ satisfies $ \rho < \mu/5.$
\end{theorem}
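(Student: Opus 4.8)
The plan is to exhibit $\mathcal F$ as a composition of maps each of which is continuously differentiable between suitably (exponentially) weighted Banach spaces, and then to apply the chain rule; the restriction $\rho<\mu/5$ will be dictated precisely by the requirement that the weights of successive spaces be compatible. Write $E_\phi(a):=\int_a^\infty e^{-\mu s}\phi(-s)\,\mathrm{d}s$; after the change of variables $\sigma=a-\tau$ the argument of $g$ in \eqref{scalar3} becomes $e^{\mu\sigma}E_\phi(a)$, so that
\[
\mathcal F(\phi)=\int_0^\infty \beta\!\left(x_m+\int_0^a g\bigl(e^{\mu\sigma}E_\phi(a)\bigr)\,\mathrm{d}\sigma\right)e^{-\mu a}\,\phi(-a)\,\mathrm{d}a .
\]
The one elementary estimate underlying everything is $|E_\phi(a)|\le\int_a^\infty e^{-\mu s}|\phi(-s)|\,\mathrm{d}s\le e^{-(\mu-\rho)a}\|\phi\|_{\mathcal X}$ (with the analogue for differences, by linearity of $\phi\mapsto E_\phi$): it shows that $L\colon\phi\mapsto E_\phi$ maps $\mathcal X$ boundedly into $Z_1:=\{f:\sup_{a>0}e^{(\mu-\rho)a}|f(a)|<\infty\}$. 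Observe that, for fixed $a$, the inner expression depends on $\phi$ only through the scalar $E_\phi(a)$: with $G_a(v):=\int_0^a g(e^{\mu\sigma}v)\,\mathrm{d}\sigma$ one has $S(a;\phi)=x_m+G_a(E_\phi(a))$, and the hypotheses on $g$ give that $G_a$ is $C^1$ with $|G_a'(v)|\le\mu^{-1}\|g'\|_\infty\,e^{\mu a}$ and $G_a'$ globally Lipschitz with constant $O(e^{2\mu a})$, both uniformly in $v$.

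Consequently the Nemytskii (substitution) operator $N_g\colon f\mapsto\bigl(a\mapsto x_m+G_a(f(a))\bigr)$ is $C^1$ from $Z_1$ into $Z_2:=\{h:\sup_{a>0}e^{-2\rho a}|h(a)|<\infty\}$. Indeed, since $g$ is bounded, $N_g(f)$ grows at most linearly and lies in $Z_2$; the derivative $\bigl(DN_g(f)h\bigr)(a)=G_a'(f(a))\,h(a)$ grows like $e^{\rho a}$ — the factor $e^{\mu a}$ from $G_a'$ being damped by the decay $e^{-(\mu-\rho)a}$ carried by $h\in Z_1$ — and is bounded from $Z_1$ into $Z_2$; and the Lipschitz bound for $G_a'$ makes the quadratic part of the increment of $N_g$ grow like $e^{2\rho a}$, so it is $O(\|h\|_{Z_1}^2)$ in the $Z_2$-norm, which is exactly why the admissible growth of the target must be doubled. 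Composing once more with $\beta$ — which is $C^1$ with bounded, globally Lipschitz derivative — repeats the doubling: $N_\beta\colon h\mapsto\beta\circ h$ is $C^1$ from $Z_2$ into $Z_3:=\{h:\sup_{a>0}e^{-4\rho a}|h(a)|<\infty\}$. Hence $\Phi:=N_\beta\circ N_g\circ L\colon\mathcal X\to Z_3$, $\Phi(\phi)=\beta\bigl(S(\cdot;\phi)\bigr)$, is $C^1$; moreover $\|\Phi(\phi)\|_{Z_3}$ and $\|D\Phi(\phi)\|$ are bounded uniformly in $\phi$, since $\|g'\|_\infty,\|\beta'\|_\infty<\infty$ and $S(\cdot;\phi)$ grows at most linearly, uniformly in $\phi$.

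Finally $\mathcal F(\phi)=\mathcal B\bigl(\Phi(\phi),\phi\bigr)$, where $\mathcal B(h,\psi):=\int_0^\infty h(a)\,e^{-\mu a}\psi(-a)\,\mathrm{d}a$ satisfies
\[
|\mathcal B(h,\psi)|\le\|h\|_{Z_3}\int_0^\infty e^{4\rho a}e^{-\mu a}|\psi(-a)|\,\mathrm{d}a=\|h\|_{Z_3}\int_0^\infty e^{-(\mu-5\rho)a}\,e^{-\rho a}|\psi(-a)|\,\mathrm{d}a\le\|h\|_{Z_3}\,\|\psi\|_{\mathcal X},
\]
so that $\mathcal B\colon Z_3\times\mathcal X\to\mathbb R$ is a bounded bilinear form precisely when $\rho<\mu/5$. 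As the composition of the $C^1$ map $\phi\mapsto(\Phi(\phi),\phi)$ with a bounded bilinear form, $\mathcal F$ is therefore $C^1$, with $D\mathcal F(\phi)\psi=\mathcal B\bigl(D\Phi(\phi)\psi,\phi\bigr)+\mathcal B\bigl(\Phi(\phi),\psi\bigr)$, and the bounds recorded for $\Phi$ and $\mathcal B$ give the asserted control of $D\mathcal F$ in terms of $\|g'\|_\infty$, $\|\beta'\|_\infty$ and $\mu$; in particular $\mathcal F$ and $D\mathcal F$ are defined on all of $\mathcal X$, also for non-positive $\phi$.

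The hard part is precisely this bookkeeping of the exponential weights — and it is what forces the threshold $\mu/5$: each of the two Nemytskii steps must be arranged so that, although its derivative worsens the admissible growth by only one factor $e^{\rho a}$, the quadratic part of its increment worsens it by two, so that the birth-rate history $\beta(S(\cdot;\phi))$ ends up in a space allowing growth $e^{4\rho a}$, and pairing such a function against $e^{-\mu a}\phi(-a)$ while comparing with the weight $e^{-\rho a}$ built into $\|\cdot\|_{\mathcal X}$ then requires $\mu>5\rho$. Once the decomposition and the weighted spaces are in place, checking that the G\^ateaux derivatives above are genuine Fr\'echet derivatives and depend continuously on $\phi$ reduces to a routine application of the mean value inequality to $g,g',\beta,\beta'$ together with the above weight estimates.
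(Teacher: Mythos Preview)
Your proposal is correct and follows essentially the same approach as the paper: decompose $\mathcal F$ as a chain of bounded linear maps and Nemytskii operators between exponentially weighted spaces, track how each Nemytskii step doubles the admissible exponential growth, and arrive at the same threshold $\rho<\mu/5$. The only cosmetic difference is that you first perform the substitution $\sigma=a-\tau$ so that your intermediate spaces consist of functions of the single variable $a$ (and your $N_g$ absorbs the paper's $\mathcal L_2\circ\mathcal G$), whereas the paper works with two-variable functions on the triangle $T=\{0\le\tau\le a\}$ and separates the composition with $g$ from the integration $\int_0^a\cdot\,\mathrm{d}\tau$; the weight bookkeeping and the final duality/bilinear pairing are otherwise identical.
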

	
	\begin{proof}
	First notice that the hypotheses imply the following estimate for any $z\geq 0$ and $h > -z$:
	
\begin{equation}\label{estimate}
\begin{array}{lr}
\vert g(z+h)-g(z)-g'(z)h\vert \\
 = \left\vert \int_z^{z+h} g'(s) ds - g'(z)h \right\vert
 = \left\vert \int_z^{z+h} \vert g'(s)-g'(z)\vert \, ds \right\vert \leq 2C \left\vert \int_z^{z+h} \vert z-s\vert ds \right\vert  =C h^2,
\end{array}
\end{equation}
and analogously for $\beta$.	
	
	The statement of the theorem amounts to showing that 
$$
\phi \rightarrow (\tilde{\mathcal{F}}(\phi)) (a) =   e^{-\mu a} \, \beta \bigg( x_m + \int_0^a g \big ( e^{- \mu(\tau -a)}
\int_{a}^{\infty} e^{-\mu s} \phi (-s) ds
\big) \, d\tau \bigg)
$$
is a continuously differentiable map from the positive cone of the Banach space
$$\mathcal{X}= \left\{ \phi \in L_{loc}^1(- \infty, 0):\left\vert \left\vert  \phi \right \vert  \right\vert_{\mathcal{X}} := \int_{-\infty}^0 e^{\rho s}\vert \phi(s)\vert ds < \infty \right\}$$
 to its dual identified with the Banach space 
$$\mathcal{X}' = \left\{f \in L_{loc}^{\infty}(0, \infty): \vert\vert f \vert\vert_{\mathcal{X}'} := \esssup_{a \in [0, \infty)} e^{\rho a} \vert f(a)\vert  < \infty  \right\}
$$ 
with the duality product  $\langle f, \phi \rangle = \int_0^{\infty} f(a) \phi(-a) \mbox{d} a .$

Indeed, we can write $\mathcal{F} (\phi) = \langle \tilde{\mathcal{F}} (\phi), \phi \rangle,$ and a rather general and straightforward argument gives, assuming differentiability of	 $\tilde{\mathcal{F}},$ 
\begin{equation} \label{differential}
D \mathcal{F} (\phi) \psi  =  \langle \tilde{\mathcal{F}}(\phi), \psi  \rangle + \langle \ D \tilde{\mathcal{F}}(\phi) \psi, \phi  \rangle.
\end{equation}
In particular, for $\phi=0,$ we have $ D \mathcal{F} (0) \psi  =  \langle \tilde{\mathcal{F}}(0), \psi  \rangle .$

Next we define three intermediate spaces of real valued continuous functions:
$$
Y = \left\{P \in C(T): \vert\vert P\vert\vert_{Y} := \sup_{(\tau,a) \in T} e^{-\rho a} \vert P(\tau,a)\vert < \infty  \right\}
$$
where $T=\{(\tau,a)\in \mathbb{R}^2: 0 \leq\tau\leq a < \infty\},$
$$
Z = \left\{v \in C(T): \vert\vert v\vert\vert _{Z} := \sup_{(\tau,a) \in T} e^{-\rho_1 a} \vert v(\tau,a)\vert  < \infty  \right\}
$$
with $\rho_1 >0$ to be chosen later,
$$
W = \left\{S \in C([0, \infty)): \vert\vert S\vert\vert_{W} := \sup_{a \in [0, \infty)} e^{-\rho_2 a}  \vert S(a)\vert < \infty  \right\}
$$
with $\rho_2>0$ to be chosen later;
and four maps:
$$
\mathcal{L}_1 : \mathcal{X} \rightarrow Y \,\, \text{defined by} \,\, (\mathcal{L}_1 \phi) (\tau,a) =  e^{- \mu(\tau -a)}
\int_{a}^{\infty} e^{-\mu s} \phi (-s) \mbox{d}s,
$$
$$
\mathcal{G}: Y \rightarrow Z \,\, \text{defined by} \,\, \mathcal{G}(P) = g \circ P,
$$	
$$
\mathcal{L}_2 : Z \rightarrow W \,\, \text{defined by} \,\, (\mathcal{L}_2 v) (a) = 
x_m + \int_{0}^{a} v(\tau,a) \mbox{d}\textcolor{blue}{\tau}
$$
and
$$
\mathcal{B}: W_{+} \rightarrow \mathcal{X}' \,\, \text{defined by} \,\, \mathcal{B}(S)(a) = e^{- \mu a} \, (\beta \circ S) (a),
$$
($W_{+}$ meaning the positive cone of $W$) in such a way that (at least formally) $\tilde{\mathcal{F}} = \mathcal{B} \circ \mathcal{L}_2 \circ \mathcal{G} \circ \mathcal{L}_1.$ Then the claim will follow from the chain rule provided we prove that the four maps are well defined and continuously differentiable with bounded derivative.

\textbf{Step 1. $\mathcal{L}_1$ is bounded linear provided that $\rho \leq \mu$.}

We have
$$
\sup_{(\tau,a) \in T} e^{-\rho a} \left\vert e^{- \mu(\tau -a)}\int_{a}^{\infty} e^{-\mu s} \phi (-s) ds \right\vert  =  \sup_{(\tau,a) \in T} e^{-\rho a} \left\vert e^{- \mu(\tau -a)}\int_{-\infty}^{-a} e^{\mu s} \phi (s) ds \right\vert 
$$
$$
\leq \sup_{a \geq 0} \int_{-\infty}^{-a} e^{(\mu-\rho)(a+s) } e^{\rho s} \vert\phi(s)\vert ds \leq \int_{- \infty}^0 e^{\rho s} \vert\phi(s)\vert ds,
$$
since $(\mu-\rho)(a+s) \leq 0$ in the last but one integral. Thus,
$
||\mathcal{L}_1 \phi||_{Y} \leq|| \phi||_{\mathcal{X}} 
.$

\textbf{Step 2. $\mathcal{G}$ is continuously differentiable with bounded derivative provided that $2 \rho \leq \rho_1$.} \\
 $\mathcal{G}$ is well defined because $g$ is bounded and continuous.\\
 Let $P \in Y$ and $Q \in Y$ such that  $\vert\vert Q\vert\vert_{Y} = 1$, which implies $\vert Q(\tau,a)\vert  \leq e^{\rho a}.$ \\
 We start by proving that $Q \rightarrow g'(P(\cdot)) Q(\cdot)$ defines a bounded linear map $\mathcal{Y} \rightarrow \mathcal{Z}$ with norm bounded independently of $P$:
 \begin{equation*}
 	\begin{aligned}
 \sup_{\left\vert \left\vert  Q  \right\vert \right\vert_{Y} = 1} \left\vert \left\vert  g'(P(\cdot)) Q (\cdot) \right\vert \right\vert_{Z} & = \sup_{\left\vert \left\vert  Q  \right\vert  \right\vert_{Y} = 1} \sup_{z \in T} e^{-\rho_1 a} \left\vert g'(P(z)) Q(z)  \right\vert \\
&  \leq  \left\vert \left\vert  g' \right\vert \right\vert_{\infty}  \sup_{z \in T} e^{- \rho a} \left\vert Q(z) \right\vert =  \left\vert \left\vert  g' \right\vert \right\vert_{\infty}.
 \end{aligned}
 \end{equation*}

  Moreover, we can write, setting $z =(\tau,a),$ and using \eqref{estimate},
 \begin{equation*}
 	\begin{aligned}
 e^{- \rho_1 a} \vert g(P(z)+ \varepsilon Q(z))-g(P(z))-g'(P(z)) \varepsilon Q(z)\vert  & \leq C \varepsilon^2  e^{- \rho_1 a} \vert Q(z) \vert^2 \\
 & \leq C \varepsilon^2 e^{(2 \rho-\rho_1)a} \leq C \varepsilon^2,
 \end{aligned}
\end{equation*}
i.e., 
$$\left\vert\left\vert \mathcal{G}(P + \varepsilon Q) - \mathcal{G}(P) - g'(P(\cdot)) \varepsilon Q(\cdot)  \right\vert \right\vert_{Z}\leq  C \varepsilon^2 .$$ 
Therefore, $(D\mathcal{G}(P) Q)(z) := g'(P(z))Q(z)$ is the Fréchet derivative of $\mathcal{G}$ at the point $P$, its norm is uniformly bounded by $\left\vert \left\vert g' \right\vert \right\vert_{\infty}$; and it is (uniformly) continuous: for $Q \in Y$ with norm $1$ we have
\begin{equation*}
	\begin{aligned}
& \left\vert \left\vert D\mathcal{G}(P_1) Q - D\mathcal{G}(P_2) Q \right\vert\right\vert_{Z} = \sup_{z \in T} e^{- \rho_1 a} \left\vert\big(g'(P_1(z))-g'(P_2(z))\big) Q(z) \right\vert \\
& \leq 2C \sup_{z \in T} e^{- \rho_1 a} \vert P_1(z)-P_2(z)\vert \,\vert Q(z)\vert \leq 2C \sup_{a \geq 0} e^{(-\rho_1 + 2 \rho) a} \vert\vert P_1-P_2\vert\vert_Y \leq 2C \vert\vert P_1-P_2\vert\vert_Y.
\end{aligned}
\end{equation*}

\textbf{Step 3. $\mathcal{L}_2$ is a positive continuous affine map provided that $0<\rho_1 < \rho_2$.}\\
It suffices to see,

\begin{equation*}
\begin{array}{ll}
\left\vert\left\vert  \mathcal{L}_2 v  - x_m   \right\vert \right\vert_W =
\displaystyle\sup_{a \in [0, \infty)} e^{- \rho_2 a} \left\vert \int_0^a v(\tau,a) d \tau \right\vert \\
\leq \displaystyle\sup_{a \in [0, \infty)}  e^{- \rho_2 a} \int_0^a e^{\rho_1 a} \vert\vert v \vert\vert_Z \, \mbox{d} \tau = \displaystyle\sup_{a \in [0, \infty)} a e^{(-\rho_2 + \rho_1) a} \vert\vert v \vert\vert_Z \leq \frac{1}{e (\rho_2-\rho_1)} \vert\vert v \vert\vert_Z.
\end{array}
\end{equation*}

\textbf{Step 4.\\ $\mathcal{B}$ is continuously differentiable with bounded derivative provided that $\rho + 2 \rho_2 \leq \mu$.}\\
First notice that the assumptions on $\beta$ imply that there exist positive constants $C_1$ and $C_2$ such that $\beta(s) \leq C_1 + C_2 s.$ Thus $\mathcal{B}$ is well defined: for $S \in W_{+ },$ since $\vert S(a)\vert  \leq e^{\rho_2 a} \vert\vert S \vert\vert_{W},$
\begin{equation*}
	\begin{aligned}
e^{\rho a} \vert e^{-\mu a} \beta(S(a)) \vert &  \leq e^{(\rho-\mu)a} (C_1+C_2 \vert S(a)\vert ) \\ 
& \leq e^{(\rho-\mu)a} (C_1 + C_2 e^{2 \rho_2 a} \vert\vert S\vert\vert_W) \leq C_1 + C_2 \vert\vert S\vert\vert_W.
\end{aligned}
\end{equation*}
As in Step 2, let us prove that $R \rightarrow e^{- \mu \cdot} \beta'(S(\cdot)) R(\cdot)$ defines a bounded linear map $W \rightarrow \mathcal{X'}$ with norm bounded independently of $S$:

\begin{equation*}
	\begin{aligned}
 \sup_{\left\vert \left\vert R \right\vert \right\vert_{W} = 1} \left\vert \left\vert  e^{- \mu \cdot} \beta'(S(\cdot)) R(\cdot) \right\vert \right \vert_{\cal{X}'} = &  \sup_{\left\vert\left\vert R \right\vert \right\vert_{W} = 1} \sup_{a \geq 0} e^{\rho a} \left\vert e^{- \mu a} \beta'(S(a)) R(a)  \right\vert \\
  \leq & \sup_{a \geq 0} e^{(\rho + \rho_2 - \mu) a} 
\left\vert \left\vert\beta' \right\vert \right\vert_{\infty} \leq \left\vert \left\vert\beta' \right\vert \right\vert_{\infty}.
\end{aligned}
\end{equation*}

Let us now proceed to show that $\mathcal{B}$ is  differentiable: Let $S \in W_{+}$ and $R \in W$ with norm equal to 1, which implies $\vert R(a)\vert < e^{\rho_2 a}$. Then, for $\varepsilon$ small enough, $S + \varepsilon R \in
W_{+}$.
Then we can write, using the $\beta$-variant of \eqref{estimate},

\begin{equation*}
\begin{array}{ll}
 e^{\rho a} \left\vert e^{-\mu a}\beta(S(a)+ \varepsilon R(a))-e^{-\mu a}\beta(S(a))-e^{-\mu a}\beta'(S(a)) \varepsilon R(a) \right\vert \\
 \leq C \varepsilon^2 e^{(\rho-\mu + 2\rho_2)a} \leq C \varepsilon^2,
\end{array}
\end{equation*}
proving that $\big(D\mathcal{B} (S) R \big) (a) := e^{- \mu a} \beta'(S(a)) R(a)$ is the Fr\'{e}chet derivative of  $\mathcal{B}$ at the point $S,$ with norm uniformly bounded by $\left\vert \left\vert\beta' \right\vert \right\vert_{\infty} $\\ 
We also show that the derivative is continuous as in Step 2. Let $R \in W$ with norm equal to 1. We have
\begin{equation*}
	\begin{aligned}
\left\vert \left\vert D\mathcal{B}(S_1) R - D\mathcal{B}(S_2) R \right\vert \right\vert_{\mathcal{X}'} = & \sup_{a \in [0, \infty)} e^{(\rho - \mu) a}  \left\vert \left( \beta'(S_1(a))-\beta'(S_2(a)) \right) R(a) \right\vert \\
 \leq & 2C \sup_{a \in [0, \infty)} e^{(\rho - \mu) a} \vert S_1(a)-S_2(a)\vert \,\vert R(a)\vert  \\ 
 \leq & 2C \sup_{a \geq 0} e^{(\rho - \mu + 2 \rho_2) a} \vert\vert S_1-S_2\vert\vert_W \leq 2C \vert\vert S_1-S_2\vert\vert_W .
 \end{aligned}
\end{equation*}
Finally, given any $\rho \in \left(0,\frac{\mu}{5}\right)$ we can take $\rho_1 = 2 \rho$ (fulfilling the assumption of Step 2) and $\rho_2 = \frac{\mu - \rho}{2} > 2 \rho$ (fulfilling the assumption of Step 3 and that of Step 4 since then $2 \rho_2 + \rho = \mu)$  to conclude the proof.
\end{proof}

	As a consequence, the chain rule gives, taking into account that  $\mathcal{L}_1$ is linear and $\mathcal{L}_2$ is affine, 
$$ D \tilde{\mathcal{F}} (\phi) \psi = D(\mathcal{B} \circ \mathcal{L}_2 \circ \mathcal{G} \circ \mathcal{L}_1)(\phi) \psi = D\mathcal{B}(\mathcal{L}_2 \mathcal{G}(\mathcal{L}_1 \phi)) \, (\mathcal{L}_2 - x_m) \, D \mathcal{G}(\mathcal{L}_1 \phi) \, \mathcal{L}_1 \psi.
$$
Since we are interested in linearisation around steady states, we can restrict to evaluation of the differential on constant functions $\bar{B}.$ So, we compute, sequentially:
\begin{equation*}
	\begin{aligned}
\mathcal{L}_1 \psi \, (\tau,a) = & e^{-\mu(\tau-a)} \int_a^{\infty}e^{-\mu s} \psi(-s) \mbox{d}s, \\
\mathcal{L}_1 \bar{B}\, (\tau,a) = & \bar{B} e^{-\mu \tau}/\mu, \\
D \mathcal{G} (\mathcal{L}_1 \bar{B}) \mathcal{L}_1 \psi \, (\tau, a) = &  g'(\bar{B}e^{-\mu \tau}/\mu) e^{-\mu(\tau-a)} \int_a^{\infty}e^{-\mu s} \psi(-s) \mbox{d} s, \\
(\mathcal{L}_2 - x_m) \, D \mathcal{G} (\mathcal{L}_1 \bar{B}) \mathcal{L}_1 \psi \, (\tau, a) = & \int_0^a  g'(\bar{B}e^{-\mu \tau}/\mu) e^{-\mu(\tau-a)} \int_a^{\infty}e^{-\mu s} \psi(-s) \mbox{d} s \, \mbox{d} \tau \\ 
= &: h(a),
\end{aligned}
\end{equation*}
and, also,
$$
\mathcal{L}_2\, \mathcal{G} (\mathcal{L}_1 \bar{B}) = x_m + \int_0^a g(\bar{B} e^{-\mu \tau}/\mu) \mbox{d} \tau (= \bar{S} (a)),
$$
where, in the last equality we assumed, furthermore, that $\bar{B}$ is not only a constant function, but a steady state (see \eqref{size_equilibrium}). Therefore,
\begin{equation*}
	\begin{aligned}
 D \tilde{\mathcal{F}} (\bar{B}) \psi = & D\mathcal{B}(\mathcal{L}_2 \mathcal{G}(\mathcal{L}_1 \bar{B})) \, (\mathcal{L}_2 - x_m) \, D \mathcal{G}(\mathcal{L}_1 \bar{B}) \, \mathcal{L}_1 \, \psi (a) \\
= & D\mathcal{B} ( \mathcal{L}_2 \, \mathcal{G} (\mathcal{L}_1 \bar{B})) h(a) = 
 e^{-\mu a} \beta'(\bar{S}(a)) h(a)  \\ 
 = & e^{-\mu a} \beta'(\bar{S}(a)) \, \int_0^a  g'(\bar{B}e^{-\mu \tau}/\mu) e^{-\mu(\tau-a)} \int_a^{\infty}e^{-\mu s} \psi(-s) \mbox{d} s \, \mbox{d} \tau.
\end{aligned}
\end{equation*}
Finally, we will have,
$$
\langle \ D \tilde{\mathcal{F}}(\bar{B}) \psi, \bar{B}  \rangle = \int_0 ^{\infty} e^{-\mu a} \beta'(\bar{S}(a)) \, \int_0^a  g'(\bar{B}e^{-\mu \tau}/\mu) e^{-\mu(\tau-a)} \int_a^{\infty}e^{-\mu s} \psi(-s) \mbox{d} s \, \mbox{d} \tau \bar{B} \mbox{d} a,
$$
which, together with \eqref{differential}, gives \eqref{nontrlinear}.

\section{The PDE formulation}\label{appendixPDE}

In this appendix we include a series of results showing that the PDE formulation is tightly related to the delay formulation (as it should be since both models are built from a description of the same biological processes). In subsection \ref{secB1} we show that one can solve the PDE problem by solving a scalar RE (with integration from 0 to $t$) for the population birth rate $B$ and that the large time limiting form of this equation is exactly (\ref{scalar}). In subsection \ref{secB2} we show that the condition characterising the existence of non-trivial steady states of (\ref{pde}) coincides with (\ref{def_R}) (in addition a formula for the non-trivial stationary population size-density is given). Finally in subsection \ref{secB3} we show that the formal linearisation of system (\ref{pde}) leads to the characteristic equation (\ref{DE-char-new_1}).

\subsection{Solution of the PDE in terms of a renewal equation}\label{secB1}

The solution of (\ref{pde}) can be written as the sum of two terms: the first considers the individuals born between $0$ and $t$ and the second considers the individuals that already exist at time 0, i.e. those reflected in the initial population density $u_0(x)$.

First notice that at time 0,
\[
\bar{E}(\xi)=\int_\xi^\infty u_0(\eta)d\eta
\]
gives the number of individuals with size larger than $\xi$, while at time $\tau$
\[
\tilde{E}(\tau)=\left(\int_0^\tau B(\sigma)e^{\mu\sigma}d\sigma + \int_0^\infty u_0(\eta)d\eta\right)e^{-\mu\tau}
\]
gives the number of individuals with size larger than $x_m$.   Since mortality is constant the number of individuals in this cohort decreases exponentially with rate $\mu$ as time increases.  As a consequence, the size at time $t$ of an individual with size $\xi$ at time 0 is
\[
X(t,0,\xi)=\xi+\int_0^t g(\bar{E}(\xi)e^{-\mu\sigma})d\sigma
\]
and the size at time $t$ of an individual born at time $\tau>0$ with $0<\tau<t$ is
\[
X(t,\tau,x_m)=x_m+\int_0^{t-\tau} g\left(\tilde{E}(\tau)e^{-\mu\sigma}\right)d\sigma.
\]
So the birth rate has to satisfy the renewal equation
\[
B(t)=B_{\text{dsc}}(t)+B_{\text{fnd}}(t)
\]
where
\[
B_{\text{dsc}}(t) = \int_0^t \beta(X(t,\tau,x_m))B(\tau)e^{-\mu(t-\tau)}d\tau
\]
is the birth rate associated to the descendants of the founder population and
\[
B_{\text{fnd}}(t) = \int_0^t \beta(X(t,0,\xi))u_0(\xi)d\xi\,e^{-\mu t}
\]
is the known birth rate associated to the founder population. Once we solve the renewal equation constructively, we can obtain an explicit expression for the (weak) solution of the PDE by integrating along characteristics.

Note that $B_{\text{fnd}}(t)$ tends to 0 exponentially as $t\rightarrow \infty$. By changing $\tau$ to $a$ with $t-\tau=a$ we can rewrite
\[
B_{\text{dsc}}(t) = \int_0^t \beta(X(t,t-a,x_m))B(t-a)e^{-\mu a}da.
\]
Now note that
\[
X(t,t-a,x_m)=x_m+\int_0^a g(\tilde{E}(t-a)e^{-\mu\tau})d\tau
\]
and
\[
\tilde{E}(t-a)=\int_0^{t-a} B(\eta)e^{\mu\eta}d\eta \,e^{-\mu(t-a)} + \int_0^\infty u_0(\eta)d\eta \, e^{-\mu(t-a)}
\]
where the second summand at the right hand side tends exponentially to 0 as $t\rightarrow \infty$. Since this term represents the founder population that remains at time $t$, let us refer to it as $P_\text{fnd}(t)$. Next, by using the transformation $\eta=t-s$ we have
\[
\tilde{E}(t-a)=\int_a^t B(t-s)e^{-\mu(s-a)}ds + P_\text{fnd}(t) = e^{\mu a}\int_a^t B(t-s)e^{-\mu s} ds + P_\text{fnd}(t).
\]
Now note that, by ignoring $B_{\text{fnd}}(t)$ and $P_{\text{fnd}}(t)$ and by replacing the upper integration boundary $t$ in the last integral by $\infty$, we obtain (\ref{scalar2}).

\subsection{Existence and characterization of non-trivial steady states}\label{secB2}

To establish criteria for the existence of non-trivial steady states $\bar{u}$ in the PDE formulation is apparently more complex than what we had to do for the delay formulation in Section \ref{section_EofSS}.

Let us first concentrate on the ordinary differential equation which arises from the first and the third equations in \eqref{pde} when one assumes that $\bar{u}$ only depends on $x$. This leads to the following second order ordinary differential equation for $E(x):=\int_{x}^{\infty}\bar{u}(s)ds$,
$$
\frac{d}{dx} \left(g(E(x)) E'(x) \right) + \mu E'(x) = 0,
$$
or, equivalently, to 
$$g(E(x)) E'(x)  + \mu E(x) = C,$$
 for some constant $C.$ Since $E(x)$ tends to $0$ when $x$ tends to $\infty,$ $C$ has to coincide with (minus) the flux of individuals leaving the system at infinity: $C =\displaystyle \lim_{x \rightarrow \infty} g(E(x))E'(x) = -g(0)\lim_{x \rightarrow \infty} \bar{u}(x)$ and so it has to be $0$ (since otherwise $\displaystyle\lim_{x \rightarrow \infty} \bar{u}(x) = - C/g(0)\neq 0$ and $\bar{u}$ would not be integrable).
 Therefore we look for solutions of the differential equation
 $$
 \frac{dE}{dx}(x) = -\mu \frac{E(x)}{g(E(x))}
 $$
 with initial condition $E(x_m)=N$ (the total population size) and such that $\displaystyle\lim_{x \rightarrow \infty} E(x)=0.$ Equivalently,
 $$
 \int_{E(x)}^N \frac{g(z)}{z} \ud z = \mu (x-x_m).
 $$
 If $G$ is a primitive of $g(z)/z,$ the previous equation reads
 $$
 G(N)-G(E(x)) = \mu (x-x_m),
 $$
 which, can be rewritten as
 $$
 E(x) = G^{-1} \left( G(N) - \mu (x-x_m)\right).
 $$
It follows that
\begin{equation}\label{new_steady_state}
\begin{aligned}
 \bar{u}(x) = & - \frac{d}{dx}\left( G^{-1} \left( G(N) - \mu(x-x_m) \right) \right) \\
  = & \frac{\mu}{G'\left( G^{-1} \left( G(N) - \mu (x-x_m) \right) \right)} = \mu \frac{G^{-1}\left(G(N) - \mu (x-x_m)\right)}{g\left(G^{-1}\left(G(N) - \mu (x-x_m)\right) \right)}. 
 \end{aligned}
\end{equation}
Since $\bar{u}(x_m)= \frac{\mu N}{g(N)}$ we have
 $$
 g(E(x_m)) \bar{u}(x_m) = g(N) \bar{u}(x_m) = \mu N.
 $$
 Therefore, using the boundary condition, a non-trivial steady state (given by \eqref{new_steady_state}) does exist if and only if a positive number $N$ exists such that
 \begin{equation}\label{compare_R0}
  N = \int_{x_m}^{\infty} \beta(x) \frac{G^{-1}\left(G(N) - \mu (x-x_m)\right)}{g\left(G^{-1}\left(G(N) - \mu (x-x_m)\right) \right)} \ud{x}.	
 \end{equation}
 This turns out to be equivalent to \eqref{def_R} with $N = B/\mu $. Indeed, we can write
 \begin{equation*}
 	\begin{aligned}
R(B)= & \int_0^{\infty} \beta \left(x_m + \int_0^a g \left( B
\frac{e^{-\mu \tau}}{\mu} \right) \, \mbox{d}\tau \right)\,\, e^{-\mu a}
\, \mbox{d}a \\
=  &\int_0^{\infty} \beta \left(x_m + \int_{N e^{- \mu a}}^N \frac{ g \left(z\right)}{\mu z} \, \mbox{d}z \right)\,\, e^{-\mu a}
\, \mbox{d}a  \\
 = & \int_0^{\infty} \beta \left(x_m  + \frac{G(N)-G(N e^{-\mu a})}{\mu} \right) e^{-\mu a}
 \, \mbox{d}a  \\ 
 = & \frac{1}{N}\int_{x_m}^{\infty} \beta(x) \frac{G^{-1}(G(N) - \mu (x-x_m))}{g\left(G^{-1}(G(N) - \mu (x-x_m) \right)} \mbox{d}x,
\end{aligned}
\end{equation*}
 where in the second equality we performed the change of variables $z = B
 \frac{e^{-\mu \tau}}{\mu}$, and in the fourth one, the change of variables $x = x_m + \frac{G(N)-G(N e^{-\mu a})}{\mu}.$
See Section \ref{part-cases} where a particular case is developed and where an explicit expression for a primitive $G$ is available.

\subsection{Linearisation in the PDE formulation}\label{secB3}

The (formal)  linearisation of the PDE \eqref{pde} around the steady state $u_*$ is very economical as it simply reads (note that $g'$ below stands for the derivative of $g$ with respect to its argument $E$)
\begin{equation}\label{lineq1}
\begin{aligned}
v_t(x,t) +\left(g(E_*(x))v(x,t)+g'(E_*(x))u_*(x)\int_x^\infty v(y,t)\,\ud y\right)_x= &-\mu v(x,t), \\
g(E_*(x_m))v(x_m,t)+g'(E_*(x_m))u_*(x_m)\int_{x_m}^\infty v(x,t)\,\ud x= &\int_{x_m}^\infty \beta(x)v(x,t)\,\ud x.
\end{aligned}
\end{equation}
Substituting $v(x,t)=e^{\lambda t}V(x)$ into \eqref{lineq1} we have
\begin{equation}\label{lineq2}
\begin{aligned}
\left(g(E_*(x))V(x)+g'(E_*(x))u_*(x)\int_x^\infty V(y)\,\ud y\right)_x= &-(\lambda+\mu) V(x), \\
g(E_*(x_m))V(x_m)+g'(E_*(x_m))u_*(x_m)\int_{x_m}^\infty V(x)\,\ud x= &\int_{x_m}^\infty \beta(x)V(x)\,\ud x.
\end{aligned}
\end{equation}
Therefore, $\lambda\in\mathbb{C}$ is an eigenvalue, if and only \eqref{lineq2} admits a solution $V\not\equiv 0$. 
We also note that although the size domain is unbounded, it can be shown, using certain properties of the governing linear semigroup,  that the part of the spectrum of the semigroup generator in the half plane $\left\{z\in\mathbb{C}\,  \vert \, \text{Re}(z)>-\mu\right\}$ contains only eigenvalues , see e.g. \cite[Sect.4.]{FH3} for more details, and therefore (linear) stability can indeed be characterized by the leading eigenvalue of the semigroup generator.

For the trivial steady state $u_*\equiv 0$ the left hand side of (\ref{lineq2}) has only local terms and therefore easily leads to the characteristic equation
\[
g(0)=\int_{x_m}^\infty \beta(x) e^{-\frac{\lambda+\mu}{g(0)}(x-x_m)}dx,
\]
which is exactly what one gets by inserting $y(t)=e^{\lambda t}$ into (\ref{linSolAt0}) and making the change of variables $x=x_m+g(0)a$. Therefore, the stability of $u_*\equiv 0$ is characterized by the net reproduction number ($R$ evaluated at the zero steady state, or the virgin environment as we previously referred to), as expected. That is,
if
\begin{equation*}
R(0)=(R_0=)\displaystyle\int_{x_m}^\infty\frac{\beta(x)}{g(0)}e^{-\frac{\mu}{g(0)}(x-x_m)}\,\ud x>1,
\end{equation*}
then $u_*\equiv 0$ is unstable; while $R(0)<1$ implies that the trivial steady state is asymptotically stable.

To deduce the characteristic equation we integrate the first equation of \eqref{lineq2} from $x$ to $\infty$, to obtain
\begin{equation}\label{lineq3}
g(E_*(x))V(x)+g'(E_*(x))u_*(x)\int_x^\infty V(y)\,\ud y=(\lambda+\mu)\int_x^\infty V(y)\,\ud y.
\end{equation}
Substituting $x=x_m$ into \eqref{lineq3} and combining it with the second equation in \eqref{lineq2} yields
\begin{equation}\label{lineq3-2}
(\lambda+\mu)\int_{x_m}^\infty V(x)\,\ud x=\int_{x_m}^\infty \beta(x)V(x)\,\ud x.
\end{equation}
Note that $\lambda\in\mathbb{C}$ is an eigenvalue if and only if \eqref{lineq3}-\eqref{lineq3-2} admits a solution $V\not\equiv 0$. To see for which $\lambda$ this is possible let us introduce
\begin{equation}\label{lineq-mod1}
H(x):=\int_x^\infty V(y)\,\ud y
\end{equation}
as unknown so that (\ref{lineq3}) boils down to the differential equation 
\[
-g(E_*(x))H'(x)+g'(E_*(x))u_*(x)H(x)=(\lambda+\mu)H(x),
\]
whose solution is
\begin{equation}\label{lineq-mod2}
H(x)=H(x_m)\exp\left(\int_{x_m}^x\frac{g'(E_*(r))u_*(r)-(\lambda+\mu)}{g(E_*(r))}\,\ud r\right)=H(x_m)\pi(x,\lambda),
\end{equation}
where we defined
\begin{equation}
\pi(x,\lambda):=\exp\left(\int_{x_m}^x\frac{g'(E_*(r))u_*(r)-(\lambda+\mu)}{g(E_*(r))}\,\ud r\right),\quad \lambda\in\mathbb{C},\,\,x\in[x_m,\infty).
\end{equation}
Then, substitution of (\ref{lineq-mod2}) into (\ref{lineq3-2}) via (\ref{lineq-mod1}) yields the characteristic equation 
\begin{equation}\label{lineq-mod3}
(\lambda+\mu) = - \int_{x_m}^\infty \beta(x)\frac{\partial}{\partial\, x}\pi(x,\lambda)\,\ud x.
\end{equation}
This equation can be rewritten (assuming that $\beta$ is differentiable) as
\begin{equation}\label{lineq6}
(\lambda+\mu)=-\beta(\infty)\pi(\infty,\lambda)+\beta(x_m)+\int_{x_m}^\infty \beta'(x)\pi(x,\lambda)\,\ud x.
\end{equation}

Next note that since $E_*'(x)=-u_*(x)$ we have 
\begin{align*}
\pi(x,\lambda)= & \exp\left(\int_{x_m}^x\frac{g'(E_*(r))u_*(r)-(\lambda+\mu)}{g(E_*(r))}\,\ud r\right) \\
= & \exp\left(-\int_{x_m}^x\frac{\lambda+\mu}{g(E_*(r)}\,\ud r\right)\exp\left(-\int_{x_m}^x\frac{\frac{\ud}{\ud\, r}(g(E_*(r))}{g(E_*(r))}\,\ud r\right)\\
= & \exp\left(-\int_{x_m}^x\frac{\lambda+\mu}{g(E_*(r)}\,\ud r\right)\frac{g(E_*(x_m))}{g(E_*(x))}.
\end{align*}
Then, if $\mu>\displaystyle\sup_{x\geq x_m}\left\{g'(E_*(x))u_*(x)\right\}$ (for example if $g'\le 0$) one has $\pi(\infty,\lambda)=0$ for every $\lambda\in\mathbb{C}$, and the characteristic equation reduces to 
\begin{equation}\label{lineq7}
\lambda+\mu=\beta(x_m)+\int_{x_m}^\infty \beta'(x)\exp\left(-\int_{x_m}^x\frac{\lambda+\mu}{g(E_*(r)}\,\ud r\right)\frac{g(E_*(x_m))}{g(E_*(x))}\,\ud x.
\end{equation}

Now let us rewrite equation \eqref{lineq7} such that the integration variable is age $a$, which will show that it is the characteristic equation \eqref{DE-char-new_1} in disguise.  Using that we have 
\begin{equation*}
\begin{aligned}
\frac{\ud a}{\ud x}(x)= & \frac{1}{g(E_*(x))},\quad \bar{S}(a)=\int_0^a g\left(\bar{B}\frac{e^{-\mu\tau}}{\mu}\right)\,\ud\tau=\Gamma^{-1}(a)=x, \\ 
\Gamma(x):= & \int_0^x\frac{1}{g(E_*(r))}\,\ud r, \quad E_*(x)=\bar{B} \frac{e^{-\mu\Gamma(x)}}{\mu},
\end{aligned}
\end{equation*}
equation \eqref{lineq7} can be rewritten as
$$
\lambda+\mu=\beta(x_m)+g\left(\frac{\bar{B}}{\mu}\right)\int_0^\infty \beta'(\bar{S}(a))e^{-(\lambda+\mu)a}\,\ud a,
$$
which is identical to the characteristic equation \eqref{DE-char-new_1} that was deduced from the delay formulation.




\end{appendices}


\bibliography{sn-bibliography}


\end{document}